\newtheorem{lem}{Lemma}
\newtheorem{thm}[lem]{Theorem}
\newtheorem{defn}[lem]{Definition}
\newtheorem{cor}[lem]{Corollary}
\newtheorem{fact}[lem]{Fact}
\newcommand{\majmod}[1]{\text{majmod}_{#1}\xspace} 
\newcommand{\pmmajmod}[1]{\text{pmmajmod}_{#1}\xspace}
\newcommand{\pathsum}[1]{\operatorname{h}(#1)}
\newcommand{\bpathsum}{\operatorname{h}} 
\renewcommand{\parity}{\operatorname{parity}} 
\newcommand{\BPM}[1]{\operatorname{PM}_{#1}} 
\newcommand{\TVD}{d_{TV}}
\newcommand{\Prob}{\operatorname{Pr}}
\newcommand{\bsn}{\{0,1\}^n}
\newcommand{\TV}{\operatorname{TV}}
\newcommand{\indexf}{\mathbbm{1}}
\newcommand{\Exp}{\mathbbm{E}}
\newcommand{\Field}{\mathbbm{F}}
\newcommand{\niklas}[1]{\textcolor{olive}{[#1]}}
\newcommand{\je}[1]{\textcolor{cyan}{#1}}
\newcommand{\old}[1]{\textcolor{red}{Old:[#1]}}
\newcommand{\fub}{Dahlem Center for Complex Quantum Systems, Freie Universit{\"a}t Berlin, 14195 Berlin, Germany}
\newcommand{\tub}{Electrical Engineering and Computer Science, Technische Universit{\"a}t Berlin, Berlin, 10587, Germany}
\newcommand{\hzb}{Helmholtz-Zentrum Berlin f{\"u}r Materialien und Energie, 14109 Berlin, Germany}
\newcommand{\hhi}{Fraunhofer Heinrich Hertz Institute, 10587 Berlin, Germany}
\newtheorem{theorem}{Theorem}[section]
\begin{document}

\title{An unconditional distribution learning advantage with shallow quantum circuits}

\author{N. Pirnay}
\affiliation{\tub}

\author{S. Jerbi}
\affiliation{\fub}

\author{J.-P. Seifert}
\affiliation{\tub}

\author{J. Eisert}
\affiliation{\fub}
\affiliation{\hhi}
\affiliation{\hzb}

\begin{abstract}
One of the core challenges of research in quantum computing is concerned with the question whether quantum advantages can be found for near-term quantum circuits that have implications for practical applications. Motivated by this mindset, in this work, we prove an unconditional quantum advantage in the \emph{probably approximately correct} (PAC) distribution learning framework with shallow quantum circuit hypotheses. We identify a meaningful generative distribution learning problem where constant-depth quantum circuits using one and two qubit gates ($\rm{QNC}^0$) are superior compared to constant-depth bounded fan-in classical circuits ($\rm{NC}^0$) as a choice for hypothesis classes. We hence prove a PAC distribution learning separation for shallow quantum circuits over shallow classical circuits. We do so by building on recent results by Bene Watts and Parham on unconditional quantum advantages for sampling tasks with shallow circuits, which we technically uplift to a hyperplane learning problem, identifying non-local correlations as the origin of the quantum advantage.
\end{abstract}

\maketitle

\section{Introduction}

Machine learning has changed the world we live in: It is relevant for most algorithms that make predictions based on past training data. With quantum algorithms having been proven to provide super-polynomial advantages for certain highly structured 
\cite{Shor1997,AshleyOverview}
or largely paradigmatic sampling \cite{GoogleSupremacy,SupremacyReview} problems, it is an extremely natural question to ask to what extent near-term quantum computers may assist in tackling machine learning tasks that have a practical inclination. This is a core theme of the emergent field of quantum machine learning \cite{biamonte2017quantum,RevModPhys.91.045002}. Identifying such advantages is, however, much less of a straightforward task that one might think: Quantum computers are known to be good at addressing highly structured problems, while machine learning algorithms are generalizing from samples
\cite{SchuldAdvantage}. Also, such advantages of quantum over classical algorithms could come in the flavor of advantages in sample complexity, in computational complexity, or in generalization.

For some highly structured data, it has been shown that quantum computers indeed offer such super-polynomial advantages in computational complexity for meaningful and well-defined learning tasks
with classical data \cite{Gyurik}. This is true in particular for \emph{probably approximately
correct} (PAC)
generator \cite{PACLearning} and density modeling \cite{DensityModelling} as well as for classification tasks \cite{TemmeML} and reinforcement learning \cite{JerbiRL}. These quantum algorithms provide fair and sound advantages over all possible classical algorithms that can be proven under plausible and mild technical assumptions. These insights are motivating for the field, in that they show that one can expect advantages of this type to be possible.
That said, they work for data only that are precisely engineered and not very plausible from a practical perspective. Above all, they resort to full fault tolerant quantum computers, which to date are still fictitious machines. 

For this reason, one of the core questions of the field---if not the most 
important one at the present stage---is to find out whether shallow quantum 
circuits may offer quantum advantages that are plausible for near-term quantum architectures.
The little we know about this 
important question to date lets us take a rather pessimistic viewpoint 
\cite{TGate}. The need for such results is aggravated by the observation that in the absence of quantum error correction, there is increasing evidence that all one has are logarithmically deep quantum circuits before the effect of quantum noise sets in 
Refs.~\cite{FG20,PRXQuantum.3.040329}, 
rendering potential quantum advantages void. So, can we hope short quantum 
circuits to offer advantages in quantum machine learning?

\begin{figure*}[t]
    \centering
 \includegraphics[width=.98\textwidth]{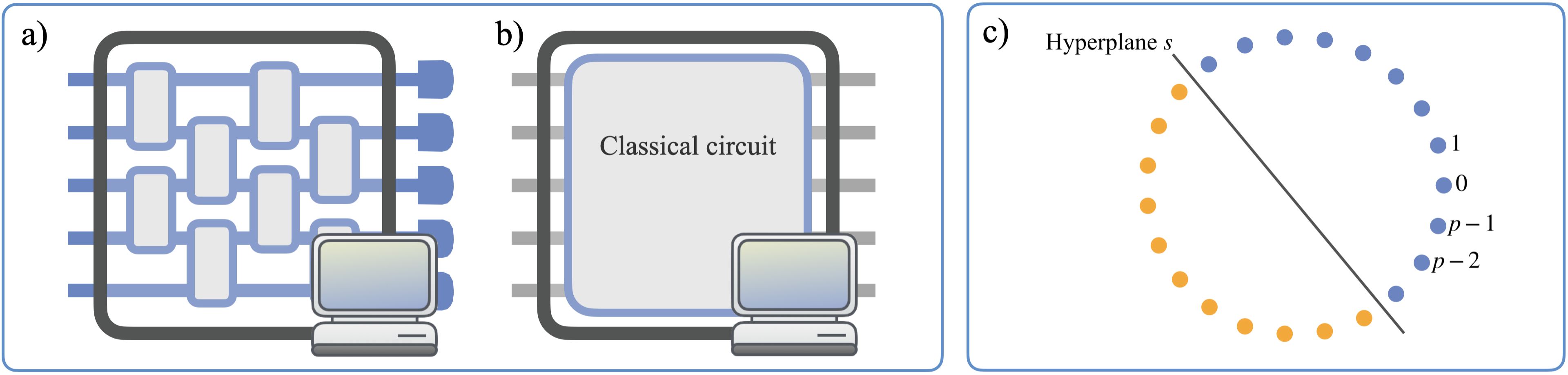}
    \caption{a) Constant depth quantum
    circuits with one- and two-qubit gates are compared with  b) constant-depth bounded fan-in classical circuits. 
    c) The $\majmod{p,s}(x)$ function acting on the finite field $\Field_p$. In this case, $p=23$ and $s=3$. The points are elements in $\Field_p$ and are blue or orange if $\majmod{p,s}(x)=0$ or $1$, respectively.}
    \label{fig:majmodp_circle}
\end{figure*}

In this work, we answer 
the question to the affirmative. We do so by strongly building on 
the results of Refs.~\cite{KoenigShallow,WattsParham} 
that show unconditional quantum advantages 
for evaluating certain functions and for sampling tasks.
Building on this work, the quantum learnability of constant-depth classical circuits has been considered \cite{Arunachalam}.
Such studies are here brought to a
new level in the distribution learning context, by adding a new technical twist: We introduce a hyperplane learning problem that allows to uplift the sampling problem at hand to a meaningful PAC distribution learning problem. 

Specifically, we formulate a class of generative PAC distribution learning problems, where we are given examples from some unknown distribution in a distribution class.
The goal is to output a generator for that distribution, that is, some hypothesis exposing an output distribution which is close to the unknown target distribution.
We show that we can learn a constant-depth quantum circuit, consisting of one and two qubit gates, which achieves a level of accuracy arbitrarily close to the target distribution.
In contrast, any classical circuit with constant depth and bounded fan-in gates is shown to achieve less proximity to the target distribution.
Thus, showing an advantage of $\rm{QNC}^0$ over $\rm{NC}^0$ as hypothesis classes.

Ultimately, we will identify the \emph{non-local correlations} in quantum states that can be prepared with constant-depth circuits as the source of the quantum advantage. The state prepared is not quite a GHZ state, as this is not within reach for constant depth quantum circuits \cite{WattsKothari} as the GHZ state
qualifies as being a `topological state' \cite{PhysRevLett.127.220503}. But with quantum circuits assisted by measurements and conditioned quantum dynamics, this is possible. Instead, what is called a `binary tree poor man's GHZ state' is being prepared. Seen in this way, our work is also interesting from the perspective of the role of \emph{measurements} in quantum computing \cite{PRXQuantum.4.020315} -- 
a question that is presently much discussed 
in research on quantum computing and 
condensed matter physics: While we do not make use of mid-circuit measurements, in the
preparation of the above state, measurements still play an important role in the preparation.

\section{Preliminaries} \label{sec:preliminaries}

Throughout this work, we will be concerned with tasks of learning probability distributions, as is common in mathematical learning theory. We begin by giving the definition for a \textit{generator} of a distribution.
\begin{defn}[Generator for $D$]
  Let $D$ be a discrete probability distribution over $\{0,1\}^n$ and $U$ be the uniform distribution.
  A generator for $D$ is any function $\mathrm{GEN}_{D}:\{0,1\}^m \rightarrow \{0,1\}^n$ that
  on uniformly random inputs outputs samples according to $D$, i.e.,
  \begin{align}
    \Prob_{x \sim U(\{0,1\}^m)} \left[ \mathrm{GEN}_{D}(x) = y \right] = D(y) \text{.}
  \end{align}
\end{defn}
In this work, we are primarily interested two classes of generators: $\rm{QNC}^0$, which are constant-depth \emph{quantum} circuits that consist of one and two qubit gates, and $\rm{NC}^0$, which are constant-depth \emph{classical} with bounded fan-in gates. Let us now define \textit{PAC learning} of a generator.

\begin{defn}[$(\epsilon, \delta)$-PAC generator learner for $\mathcal{D}$]
  Let $\mathcal{D}$ be a class of discrete
  probability distributions over $\bsn$. Given some fixed $\epsilon,\delta\in (0,1)$
  an algorithm $\mathcal{A}$
  is an $(\epsilon, \delta)$-PAC generator (GEN) learner of $\mathcal{D}$, if for all
  $D \in \mathcal{D}$, when given access to samples from $D$, with probability at least
  $1-\delta$, $\mathcal{A}$ outputs a generator for some distribution $D'$, satisfying
  \begin{align}
    \TVD(D,D')\leq \epsilon,    
  \end{align}
  where $\TVD$ is the total variation distance.\\
  We call $\mathcal{A}$ an efficient $(\epsilon,\delta)$-PAC generator learner for $\mathcal{D}$ if its time (and sample) complexity are $O(\mathrm{poly}(n))$.
  We say that $\mathcal{D}$ is $(\epsilon,\delta)$-PAC generator learnable with $\mathcal{H}$, if an efficient $\mathcal{A}$ exists and if we constrain the generator to some generator class $\mathcal{H}$.
\end{defn}
In this work, we construct a class of distributions that are PAC generator learnable with arbitrary low precision $\epsilon$ and failure probability $\delta$, if the generator is chosen from $\rm{QNC}^0$.
If, however, the generator is chosen from $\rm{NC}^0$, such $\epsilon, \delta$ are not achievable.

\begin{figure*}[t]
    \includegraphics[width=.98\textwidth]{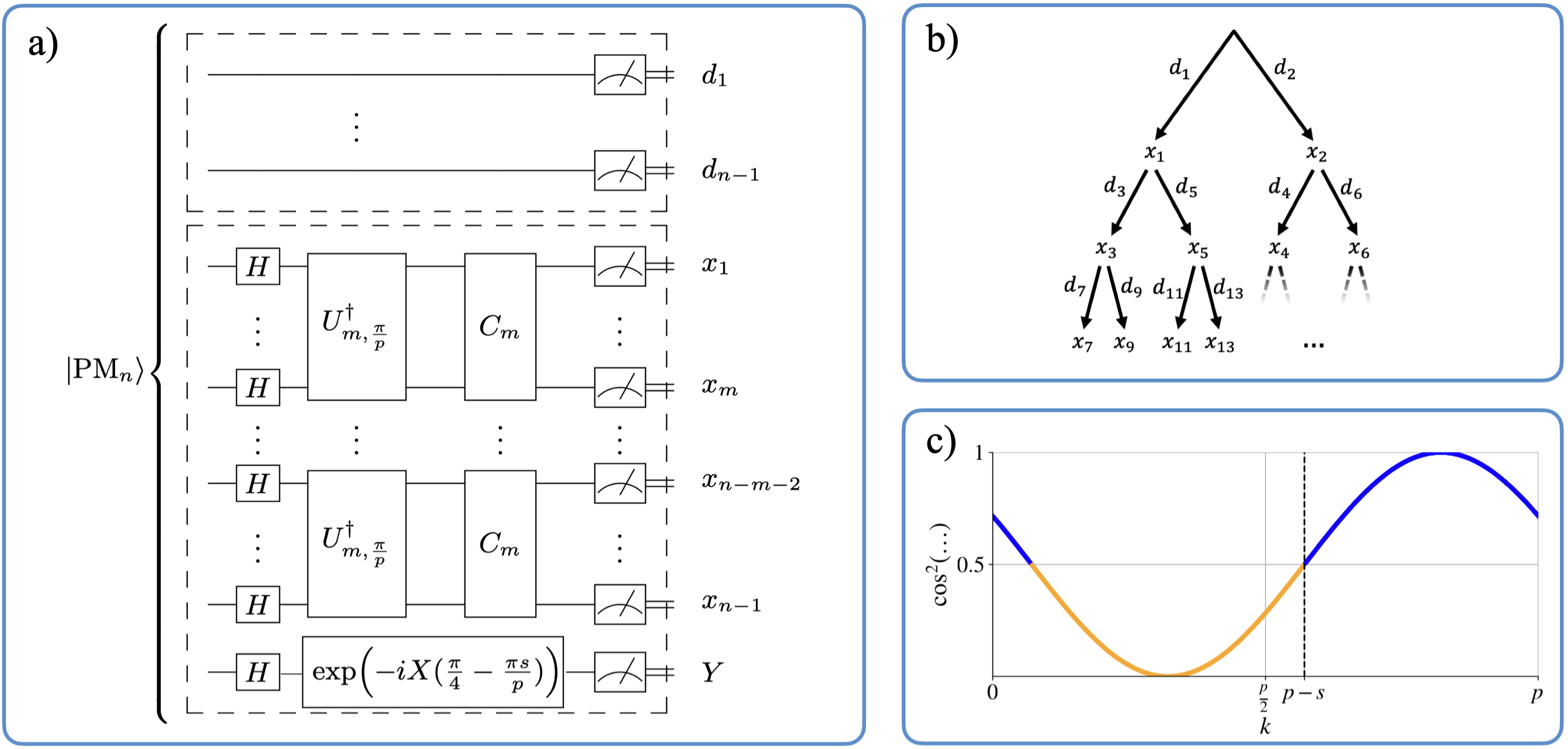}
     \caption{a) Unitary circuit producing the target distribution $D_{n,p,s}$. The upper box indicates the $n-1$ ``edge'' qubits of the state vector $\ket{\BPM{n}}$. The lower box indicates the $n$ ``vertex'' qubits of the same state. Note that the input state vector  $\ket{\BPM{n}}$ can be prepared in constant-depth,  following \cref{thm:Binary_Tree_Poor_Man_Construction}. The gates $U_{m, \frac{\pi}{p}}$ and $C_m$ act on blocks of $m = \left \lceil 2/c + 1 \right \rceil$ qubits, for an arbitrary constant $c\in (0,1/3)$, and are defined in \cref{appendix:gate_defintions}.
    b) Balanced binary tree, where the vertices and edges are assigned to the binary variables $x_1, \ldots, x_{n-1}$ and $d_1, \ldots, d_{n-1}$, respectively, useful for the definition of the state vector $\ket{\BPM{n}}$. 
    c) Visualization of $\cos^2(-\frac{\pi}{4} + \frac{\pi}{p}(k + s))$, a core function in the definition of the target distribution $D_{n,p,s}$. The blue and orange portions of the plot signal the $\majmod{p,s}(k) = 0$ and $1$ values, respectively, as in \cref{fig:majmodp_circle}.c).
    }
\label{fig2}
\end{figure*}

\section{From sampling advantage to learning advantage} \label{sec:gen_learning_problem}

The learning separation proven here derives from a separation in
a sampling problem. 
\citet{WattsParham} introduce a class of distributions $\{D_n\}$, for which they show that, asymptotically in $n$, a constant-depth quantum circuit samples approximately from $D_n$ with higher fidelity than any constant-depth classical circuit could. To be more precise, Ref.~\cite{WattsParham} proves
the following theorem.

\begin{thm} [Theorem 3 in Ref.~\cite{WattsParham}] \label{thm:wattsparham_thm3}
For each $\delta \in (0,1)$, there exists a family of distributions $\left\{D_n\right\}$ such that for each $n \in \mathbb{N}, D_n$ is a distribution over $\{0,1\}^n$ and
\begin{enumerate}[leftmargin=4mm]
    \item There exists a constant-depth quantum circuit which takes state vector $\left|0^n\right\rangle$ as input and produces a distribution which has total variation distance at most $\frac{1}{6}+O\left(n^{-c}\right)$ from $D_n$ for some $c \in(0,1)$.
    \item Each classical circuit with fan-in 2 which takes $n+n^\delta$ random bits as input and has total variation distance at most $\frac{1}{2}-\omega(1 / \log n)$ from $D_n$ has depth $\Omega(\log \log n)$. 
\end{enumerate}
\end{thm}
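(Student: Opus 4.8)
The plan is to realize $\{D_n\}$ as a clean combinatorial proxy for the ideal output statistics of an explicit constant-depth quantum circuit built around the binary-tree ``poor man's GHZ state'' $\ket{\BPM{n}}$ of \cref{fig2}, so that item~1 is essentially by construction together with an error analysis, while item~2 is a locality (light-cone) argument showing that no shallow classical circuit can reproduce the non-local, GHZ-type parity correlations that $D_n$ carries. Concretely, one takes $D_n := D_{n,p(n),s(n)}$ for a suitably chosen prime $p(n)$ and shift $s(n)$ (both allowed to depend on the ambient parameter $\delta$), defined through the measurement law in \cref{fig2}: the relevant output bits are Bernoulli variables with parameter $\cos^2(-\tfrac{\pi}{4}+\tfrac{\pi}{p}(k+s))$, where $k$ is an integer read off the (random) edge-qubit outcomes, and the rounded values of these weights are exactly the $\majmod{p,s}$ bits.

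For item~1, first prepare $\ket{\BPM{n}}$ in constant depth; this is precisely \cref{thm:Binary_Tree_Poor_Man_Construction}, and it is the source of the $O(n^{-c})$ error, since the preparation reproduces the ideal state up to total-variation error $O(n^{-c})$ once the block size is fixed to $m=\lceil 2/c+1\rceil$. The state has $n$ ``vertex'' and $n-1$ ``edge'' qubits; measuring the edge qubits fixes a binary labelling of the tree edges, and conditioned on that outcome the vertex qubits form a GHZ-like superposition whose relative phases are pinned by the labelling, i.e.\ by the integer $k$. Applying the block gates $U_{m,\pi/p}$ (single-qubit $Z$-rotations by multiples of $\pi/p$ dictated by the tree geometry) followed by the $C_m$ gates, and then measuring in the fixed Hadamard basis, makes each relevant output bit a Bernoulli variable with the parameter $\cos^2(-\tfrac{\pi}{4}+\tfrac{\pi}{p}(k+s))$ defining $D_{n,p,s}$ (a qubit rotated by angle $\theta$ and measured in the $X$ basis yields $1$ with probability $\cos^2\theta$). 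Collecting the $O(n^{-c})$ preparation error with a crude but adequate bound of $\tfrac16$ on the residual mismatch between this rotated-basis emulation and the discrete target yields the stated $\tfrac16+O(n^{-c})$; all that is really needed here is that the quantum error is bounded strictly below $\tfrac12$.

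For item~2, suppose $C$ is a fan-in-$2$ classical circuit on $n+n^\delta$ uniformly random input bits, of depth $d\le\epsilon_0\log\log n$, whose output law $D'$ satisfies $\TVD(D',D_n)\le\tfrac12-\omega(1/\log n)$; one aims for a contradiction once $\epsilon_0$ is small enough. The structural input is that $D_n$, inheriting the graph-state stabilizer relations dressed by the rotations, pins down (up to a bias absorbing the $\tfrac16$ and $O(n^{-c})$ slack by averaging over edge outcomes) many \emph{nested subtree parities}: for a subtree with leaf set $A$ the parity $\bigoplus_{i\in A}y_i$, combined with a few nearby edge-qubit bits, equals a fixed value, while every individual output marginal of $D_n$ is close to $\tfrac12$. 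In $C$, by contrast, each output bit depends on at most $2^d\le(\log n)^{\epsilon_0}$ input bits --- its light cone --- no matter how large $n^\delta$ is. A light-cone argument in the balanced tree then isolates, within one subtree-block, $\omega(\log n)$ leaves whose output bits are stochastically independent under $D'$ and individually non-degenerate (each taking value $1$ with probability in $[\omega(1/\log n),\,1-\omega(1/\log n)]$, since single-bit marginals are preserved up to $\TVD$-slack); hence their parity is within $o(1/\log n)$ of an unbiased coin under $D'$, whereas the same parity is $o(1/\log n)$-close to deterministic under $D_n$. This forces $\TVD(D',D_n)\ge\tfrac12-o(1/\log n)$, contradicting the hypothesis. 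Equivalently, one may route item~2 through a reduction: a shallow classical sampler for $D_n$ would give a shallow classical solver for a noisy, average-case version of the Bravyi--Gosset--K\"onig relation problem, which is impossible.

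The hard part is this quantitative locality argument. One must simultaneously choose the subtree-block and its leaf subset so that $\omega(\log n)$ coordinates with pairwise disjoint light cones survive even though each light cone can have size $(\log n)^{\epsilon_0}$ and the circuit's fan-out is unbounded; keep the residual bias of the relevant subtree parity under $D_n$ genuinely $o(1/\log n)$ despite having only $\tfrac16+O(n^{-c})$ control on the quantum side, which forces one to work with an instance-averaged parity rather than an exact constraint; and tune $\epsilon_0$ so that the trade-off lands exactly at depth $\Omega(\log\log n)$ against the closeness threshold $\tfrac12-\omega(1/\log n)$. It is precisely the balanced-binary-tree geometry --- depth $\Theta(\log n)$, $n$ leaves, nested subtree parities --- that makes these three requirements mutually compatible.
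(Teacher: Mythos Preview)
First, a framing point: the paper does not prove \cref{thm:wattsparham_thm3} at all---it is quoted verbatim from Ref.~\cite{WattsParham} and used as motivation. The paper's own technical work concerns the related but distinct distributions $D_{n,p,s}$ and the $\pmmajmod{p,s}$ function (\cref{thm:distr_approx,thm:classical_LB}). So there is no ``paper's proof'' to compare against here; I compare your sketch to the actual Watts--Parham argument as reflected in the paper's appendices and \cref{thm:classical_LB}.

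For item~1 your error bookkeeping is off. The state $\ket{\BPM{n}}$ is prepared \emph{exactly} by a depth-$3$ circuit (\cref{thm:Binary_Tree_Poor_Man_Construction}); there is no $O(n^{-c})$ preparation error there. The $O(n^{-c})$ term arises from replacing the non-unitary $A_{m,\theta}$ gates by their unitary approximants $U_{m,\theta}$ (see \cref{lem:A_and_U_closeness} and \cref{eq:bound_for_TVPQ}). The constant in front---whether $1/6$ or $1/2-1/\pi$---comes from averaging $\cos^2(-\pi/4+\pi k/p)$ against the $\majmod{p}$ indicator over $k$; your ``crude but adequate bound of $1/6$'' is asserted, not derived, and in fact the version the paper states (\cref{thm:distr_approx}) gives $1/2-1/\pi\approx 0.182$, not $1/6$.

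For item~2 there is a genuine gap: you have the wrong non-local structure. The distribution $D_n$ does \emph{not} pin down ``many nested subtree parities''; it encodes a \emph{single} global constraint, namely that the last bit equals $\majmod{p}\!\big(\sum_i x_i(-1)^{h(d)_i}\big)\oplus\parity(x)$. There is no family of deterministic subtree parities to test, so your proposed test---find $\omega(\log n)$ output bits with disjoint light cones, show their parity is unbiased under $D'$ but near-deterministic under $D_n$---does not apply. The actual lower bound (Viola~\cite{viola}, adapted in \cite{WattsParham} and restated as \cref{thm:classical_LB}) proceeds quite differently: one builds a \emph{forest partition} of the binary tree into a top tree $T_0$ and $\Theta(N/d^3)$ small trees, identifies $\Omega(N/d^3)$ input bits each controlling an independent forest block, and then runs a statistical test with three components (too few minimal blocks, too many $y$-fixed blocks, or the output string is incorrect). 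The crux is showing that for any fixing of the ``shared'' inputs, the signed sum $S$ and the Hamming weight $|x|$ become independent linear combinations of fresh random bits with bounded coefficients, after which a number-theoretic anticoncentration lemma forces $\MM{p}(S)\oplus\parity(|x|)$ to be nearly unbiased. Your BGK-style parity/light-cone heuristic does not reach this structure and would not yield the $\tfrac{1}{2}-\omega(1/\log n)$ threshold.
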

Effectively, \citet{WattsParham} show an unconditional \emph{sampling} advantage of $\rm{QNC}^0$ over $\rm{NC}^0$.
In our work, we are interested in PAC generator \emph{learning} problems.
So the question is, how do we obtain a learning advantage from the sampling advantage?
From Theorem~\ref{thm:wattsparham_thm3} we can actually directly construct a PAC generator learning problem, simply by taking the distribution class to be learned to consist of exactly one distribution $D_n$ for every $n$.
However, it is hardly a learning task when the distribution class consists only of one distribution per problem size.
To introduce a genuine learning task into the ideas of Ref.~\cite{WattsParham}, we introduce a hyperplane learning problem into the ``majority mod $p$'' function. The ``majority mod $p$'' function is central in the constructions in Ref.~\cite{WattsParham} and we generalize it to ``majority mod $p$ under hyperplane $s$''. That is, we introduce the function
\begin{align}
    \majmod{p,s}(x)=
    \begin{cases}
        0 \text {, if }|x|+s <p / 2 & \bmod\ p ,\\
        1 \text {, if }|x|+s >p / 2 & \bmod\ p,
    \end{cases}
\end{align}
for each $x \in\{0,1\}^{n-1}$, prime $p$ and $s \in \Field_p$,
where $\Field_p$ denotes the standard finite field of order $p$.
We will occasionally abuse the notation and allow $\majmod{p,s}$ to take an \emph{integer} argument that replaces $|x|$ in the calculation of the function, i.e., $\majmod{p,s}(x) = \majmod{p,s}(|x|)$.
\Cref{fig:majmodp_circle} shows the behavior of the $\majmod{p,s}$ function.
Following Ref.~\cite{WattsParham}, we define the distribution
\begin{equation}
    D_n := (x, \operatorname{majmod}_{p,s}(x) \oplus \operatorname{parity}(x))_{x \sim U }.
\end{equation}
Now, if we want to learn a generator for $D_n$, when $n, p$ are given, one needs to learn the hyperplane $s$ from $m$ examples of $D_n$.
It is straightforward to generalize the proofs in Ref.~\cite{WattsParham} to obtain the hardness to learn a classical constant-depth generator for $D_n$.
At the same time, one can easily come up with a simple algorithm to learn $D_n$ with generator formed by a constant-depth quantum circuit.
However, the result would still suffer from the limitations of Theorem~\ref{thm:wattsparham_thm3}, where the quantum generator would be able to approximate the target distribution better than the classical generator, but up to an error that cannot be made arbitrarily small (below $\frac{1}{6}$).

Conversely, in the PAC learning setting, we require that for some probability of failure $\delta$, we can make the error $\epsilon$ arbitrarily low (usually by taking more examples). Inspired by these observations, we will, in the subsequent sections, introduce a new target distribution class, for which we show that we can learn constant-depth quantum generators, that generate the target distributions (with arbitrary precision). At the same time, we show that any constant-depth classical generator cannot approximate the target distributions up to some constant error.

\subsection{PAC generator learning advantage of $\rm{QNC}^0$ over $\rm{NC}^0$} \label{sec:pac_gen_advantage}

To obtain a PAC learning advantage with $\rm{QNC}^0$ with arbitrary precision, in what follows, we will first provide and explain in \cref{subsec:target_distribution} the definition of our target distribution class and explain its construction.
We then proceed in \cref{subsec:learning_alg} to give a PAC generator learning algorithm for this target distribution class using constant-depth quantum circuits.
Then, in \cref{subsec:classical_hardness} we argue that any PAC generator learning algorithm for that distribution class that uses constant-depth classical circuits
must make some non-negligible error, building on the results of 
Ref.\ \cite{WattsParham}.

\subsection{The target distribution class $\mathcal{D}$} \label{subsec:target_distribution}
We will proceed in two steps.
Fist we will define a distribution of the form $(Z, \pmmajmod{p,s}(Z))$, where $\pmmajmod{p,s}$ is to be defined.
In the second step, we give a constant-depth quantum circuit that approximately samples from the $(Z, \pmmajmod{p,s}(Z))$ distribution.
We then define the Born distribution of that constant-depth quantum circuit to be the target distribution of our PAC learning task.

We define the distribution $(Z, \pmmajmod{p,s}(Z))$ constructively. For that, let us first consider the balanced binary tree in \cref{fig2}.b).
Call its vertex variables $x_1, \ldots, x_{n-1} \in \{0,1\}$ and its edge variables $d_1, \ldots, d_{n-1}\in \{0,1\}$. Now define the function $\pathsum{d} : \{0,1\}^{n-1} \rightarrow \{0,1\}^{n-1}$ by 
\begin{align}
    \pathsum{d}_i = \bigoplus_{j :\ d_j \in P(x_i)} d_j && i \in \{1, 2, \dots, n-1\},
\end{align}
where $P(x_i)$ is the set of edges contained in the (unique) path going from the root to $x_i$.
Essentially, $\pathsum{d}_i$ is the parity of the edge variables along the path from the root to $x_i$.
Given this function, we can define the so-called \emph{binary tree poor man's GHZ state} vector as
\begin{align}
    \ket{\BPM{n}} = \sum_{d \in \{0,1\}^{n-1}} &\frac{1}{2^{(n-1)/2}} \ket{d} \\
    &\otimes \frac{1}{\sqrt{2}} \left(\ket{\bpathsum(d) 0  \vphantom{\overline{\bpathsum(d)}}} + \ket{\overline{\bpathsum(d)} 1} \right),\nonumber
\end{align}
where we call the first $n-1$ qubits of $\ket{\BPM{n}}$ ``edge'' qubits, and the following $n-1$ qubits ``vertex'' qubits.
The binary tree poor man's GHZ state is also used in Ref.~\cite{WattsParham} and is a special case of the \emph{poor man's cat state}~\cite{WattsKothari}, which is defined over arbitrary graphs.
As discussed in Refs.~\cite{WattsKothari, WattsParham}, $\ket{\BPM{n}}$ can be constructed with a constant-depth unitary quantum circuit.
The ``poor man's'' attribute comes from the fact that the state can be created with little resources (i.e., in constant depth) but, compared to the GHZ state vector $\frac{1}{\sqrt{2}} (\ket{0}^{\otimes n} + \ket{1}^{\otimes n})$, has random bit flips that cannot be corrected in constant depth.
\begin{thm}[Theorem 29 in Ref.\ \cite{WattsParham}] \label{thm:Binary_Tree_Poor_Man_Construction}
    For any $n$, the state vector  $\ket{\BPM{n}}$ can be constructed by a depth-3 quantum circuit consisting of $1$ and $2$ qubit gates acting on $2n-1$ qubits. 
\end{thm}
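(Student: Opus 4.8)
The plan is to exhibit an explicit depth-$3$ circuit on $2n-1$ qubits and to verify that its output equals $\ket{\BPM{n}}$. The guiding observation is that, after a relabeling, $\ket{\BPM{n}}$ is an ordinary graph-type state attached to the tree of \cref{fig2}.b): write $V$ for its $n$ vertices (the root together with $x_1,\dots,x_{n-1}$) and $E$ for its $n-1$ edges $d_1,\dots,d_{n-1}$. I claim
\[
\ket{\BPM{n}} = \frac{1}{2^{n/2}} \sum_{z \in \{0,1\}^{V}} \left(\bigotimes_{e=(u,v)\in E} \ket{z_u \oplus z_v}\right) \otimes \ket{z}.
\]
To see this, group the $2^n$ terms on the right according to the edge register. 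Since the tree is connected, two vertex assignments $z,z'$ produce the same tuple $(z_u\oplus z_v)_{e\in E}$ exactly when $z'\in\{z,\bar z\}$; since the tree is acyclic, every $d\in\{0,1\}^{E}\cong\{0,1\}^{n-1}$ does occur, realized by the unique $z^{(d)}$ whose root-coordinate is $0$ and whose non-root coordinates are $\bpathsum(d)_i=\bigoplus_{j:\ d_j\in P(x_i)} d_j$. Pairing $z^{(d)}$ with $\overline{z^{(d)}}$ and peeling off a factor $1/\sqrt{2}$ reproduces the definition of $\ket{\BPM{n}}$ (reading the extra ``$0$''/``$1$'' qubit as the root vertex). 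So it is enough to prepare the state above.

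\textbf{The circuit.} Start all $2n-1$ qubits in $\ket{0}$. Layer $1$: apply $H$ to the root qubit and, for every non-root vertex $v$ with parent edge $e$, apply the single two-qubit gate $G_{v,e}$ that acts as ``$H$ on $q_v$, then $\mathrm{CNOT}$ with control $q_v$ and target $q_e$''. These gates live on pairwise disjoint pairs $(q_v,q_e)$, so Layer $1$ has depth $1$; afterwards each edge qubit holds a copy of the random value of its child endpoint and each vertex qubit holds its own random value. Layers $2$ and $3$: for every vertex $u$ and every edge $e$ joining $u$ to a child $v$, apply $\mathrm{CNOT}$ with control $q_u$ and target $q_e$. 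Because the tree is binary, $u$ has at most two children, so these $\mathrm{CNOT}$s split into two rounds of disjoint gates (``left-child edges'', then ``right-child edges''). After Layer $3$ each $q_e$ holds $z_u\oplus z_v$, which is exactly the displayed state. The circuit has depth $3$, uses only one- and two-qubit gates, and acts on $n+(n-1)=2n-1$ qubits.

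\textbf{Main obstacle.} The gate bookkeeping (that each layer's gates sit on disjoint qubits) and the normalization arithmetic are routine. The substantive step is the regrouping identity: one must argue that the map $z\mapsto(z_u\oplus z_v)_{e\in E}$ has all fibers of size exactly two and is onto $\{0,1\}^{n-1}$ — this is precisely where connectedness and acyclicity of the tree enter — and then identify the surviving representative with the path-parity function $\bpathsum$ from the definition of $\ket{\BPM{n}}$. A secondary subtlety is getting the depth down to exactly $3$ rather than $4$: the obvious route, an $H$-layer followed by three $\mathrm{CNOT}$ rounds (an edge-colouring of the incidence tree, whose maximum degree is $3$), costs depth $4$, and the extra layer is removed by fusing the Hadamard on each non-root vertex into the first $\mathrm{CNOT}$ leaving it (the gate $G_{v,e}$); the binary branching of the tree is exactly what then lets the remaining parent-to-edge $\mathrm{CNOT}$s fit into two rounds.
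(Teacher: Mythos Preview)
The paper does not actually prove this statement: it is quoted verbatim as Theorem~29 of Ref.~\cite{WattsParham} and used as a black box. So there is no ``paper's own proof'' to compare against.

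Your argument is correct and complete. The regrouping identity is exactly right: for a tree on $n$ vertices the map $z\mapsto(z_u\oplus z_v)_{e\in E}$ is a $2$-to-$1$ surjection onto $\{0,1\}^{n-1}$ with fibers $\{z,\bar z\}$, and picking the representative with root-bit $0$ recovers the path-parity vector $\bpathsum(d)$ --- this is the substantive identification and you have it. The circuit is also valid: the pairs $(q_v,q_e)$ in Layer~$1$ are disjoint because every non-root vertex has a unique parent edge and every edge has a unique child endpoint; the parent-to-child-edge CNOTs in Layers~$2$ and~$3$ are disjoint within each layer because in a binary tree every vertex has at most two children and the left/right split separates them. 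The fusion of the child-side Hadamard into the first CNOT (your $G_{v,e}$) is a legitimate two-qubit unitary and is indeed the trick that brings the depth down from~$4$ to~$3$. This is essentially the construction Watts and Parham give (their ``poor man's cat state'' is built the same way over general graphs), so your proof matches the intended one.
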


Now, let us consider the distribution that takes the form $(Z, \pmmajmod{p,s}(Z))$, where $Z$ is a uniformly sampled $(2n-2)$-bit string and $\pmmajmod{p,s}$ is what we call the ``poor man's majority mod $p$'' function, defined as
\begin{align}
    &\pmmajmod{p,s}:\{0,1\}^{n-1} \times \{0,1\}^{n-1} \rightarrow \{0,1\}\allowdisplaybreaks\\
    &\pmmajmod{p,s}(Z) =  \pmmajmod{p,s}(d,x) \nonumber \\
    &:= \majmod{p,s}\left(\sum_{i=0}^{n-1} x_{i} (-1)^{\bpathsum(d)_{i}} \right) \oplus \parity(x)\nonumber.
\end{align}

The results of Ref.\ \cite{WattsParham} show that there exists a constant-depth unitary quantum circuit that approximates the $(Z, \pmmajmod{p,s}(Z))$ distribution.
An adaption (generalization of $\pmmajmod{p}$ to include the shift $s$) of Theorem 33 in Ref.\ \cite{WattsParham} yields the following.

\begin{thm}[Approximating the desired
distribution] \label{thm:distr_approx}
For $n$ sufficiently large and $p = n^c$ for any constant $c \in (0, 1/2)$ there exists a constant-depth quantum 
circuit consisting of one- and two-qubit unitary gates
which takes the $(2n - 1)$-qubit all zeros state as input and produces an output which, when measured in the
computational basis, samples from a $(2n-1)$-bit distribution $(Z', Y )$ which correlates approximately with
the distribution $(Z, \pmmajmod{p,s}(Z))$, i.e.
\begin{align}
    &\TVD((Z',Y), (Z, \pmmajmod{p,s}(Z))) \nonumber \\
    &\leq \frac{1}{2} - \frac{1}{\pi} + O(n^{-c}).
\end{align}
\end{thm}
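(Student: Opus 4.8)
The plan is to build the claimed circuit in three stages that mirror the structure of the target distribution, and then control the total variation distance by a triangle inequality through an exactly-correlated intermediate distribution.

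First I would start from the binary tree poor man's GHZ state vector $\ket{\BPM{n}}$ on $2n-1$ qubits, which by \cref{thm:Binary_Tree_Poor_Man_Construction} is prepared from $\ket{0^{2n-1}}$ by a depth-3 circuit of one- and two-qubit gates. The point of using $\ket{\BPM{n}}$ rather than a product state is that measuring its edge qubits yields a uniform $d \in \{0,1\}^{n-1}$, while the remaining $n$ qubits collapse to $\frac{1}{\sqrt 2}(\ket{\bpathsum(d)\,0} + \ket{\overline{\bpathsum(d)}\,1})$ — a poor man's cat state whose support encodes the sign pattern $(-1)^{\bpathsum(d)_i}$ that appears inside $\pmmajmod{p,s}$. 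I would then apply, on blocks of $m = \lceil 2/c + 1\rceil$ qubits, the gates $U_{m,\pi/p}$ and $C_m$ from \cref{appendix:gate_defintions}: the $U_{m,\pi/p}$ rotations are chosen so that, conditioned on the collapsed sign pattern, each vertex qubit $x_i$ is rotated by $\pm \pi/p$, implementing the running sum $\sum_i x_i (-1)^{\bpathsum(d)_i}$ as an accumulated phase/angle on an ancilla-like output register; the $C_m$ gates then threshold this accumulated angle against $\pi/2$, which is exactly the $\majmod{p,s}$ decision, and the $\parity(x)$ correction is folded in by the remaining Clifford structure. The shift $s$ enters simply as a fixed initial offset rotation by $\frac{\pi}{p}s$ before the accumulation begins — this is the only change relative to Theorem 33 in Ref.~\cite{WattsParham}, and it commutes through the rest of the analysis without modification.

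Second, I would define the exact intermediate distribution $(Z, \pmmajmod{p,s}(Z))$ with $Z$ uniform, and bound $\TVD((Z',Y),(Z,\pmmajmod{p,s}(Z)))$ by splitting the error into two independent sources. The $Z'$ marginal is exactly uniform on $\{0,1\}^{2n-2}$ because the edge-qubit measurement is uniform and the vertex rotations are balanced, so no error is incurred there. The error is entirely in the conditional label: conditioned on $Z$, the quantum circuit outputs the correct bit $\pmmajmod{p,s}(Z)$ with probability $\cos^2\!\big(-\tfrac{\pi}{4} + \tfrac{\pi}{p}(k+s)\big)$ where $k = \sum_i x_i (-1)^{\bpathsum(d)_i} \bmod p$, because the thresholding gate $C_m$ reads out a rotated qubit whose overlap with the correct computational basis state is exactly this cosine-squared (this is the function plotted in \cref{fig2}.c). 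Averaging $\tfrac12\big|1 - 2\cos^2(\cdots)\big| = \tfrac12|\!\cos(2(\cdots))| = \tfrac12|\sin(\tfrac{2\pi}{p}(k+s))|$ over $k$ uniform in $\{0,\dots,p-1\}$ gives a Riemann sum for $\tfrac{1}{2\pi}\int_0^{2\pi} \tfrac12|\sin\theta|\,d\theta \cdot 2 = \tfrac12 - \tfrac1\pi$ in the limit, with the finite-$p$ discretization error and the block-truncation error from using only $m = O(1)$ qubits per block both bounded by $O(1/p) = O(n^{-c})$.

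The main obstacle I expect is not the averaging argument — that is essentially a one-line trigonometric integral once the conditional success probability is identified — but rather verifying that the block-structured gates $U_{m,\pi/p}$ and $C_m$, which act only on $O(1)$-qubit windows and therefore cannot see the global running sum, nonetheless implement the thresholded accumulation with only $O(n^{-c})$ aggregate error. This is exactly the delicate ``carry-free'' or ``approximate counting in constant depth'' mechanism from Ref.~\cite{WattsParham}: one has to argue that the phase $\tfrac{\pi}{p}\sum_i x_i(-1)^{\bpathsum(d)_i}$ can be distributed across the tree and recombined locally because $p = n^c$ with $c < 1/2$ makes each individual contribution $O(1/p)$ small and the total phase $O(n/p) = O(n^{1-c})$ — wrapping many times around the circle, but in a way that the poor-man's cat structure keeps coherent up to the stated error. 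I would handle this by invoking the relevant lemma structure of Ref.~\cite{WattsParham} essentially verbatim, checking only that the substitution $k \mapsto k+s$ is a rigid rotation of the circle that leaves every intermediate estimate invariant, and that the constraint $c \in (0,1/2)$ (rather than $(0,1)$ as in \cref{thm:wattsparham_thm3}) is precisely what is needed to keep both the discretization and truncation errors in the $O(n^{-c})$ regime.
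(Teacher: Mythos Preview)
Your high-level plan (prepare $\ket{\BPM{n}}$, apply local gates, do the Riemann-sum average for the $\frac12-\frac1\pi$ term, add the shift $s$ as a commuting offset) is on the right track, but two of its load-bearing claims are wrong and cause you to misidentify where the $O(n^{-c})$ error actually comes from.

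First, $C_m$ is \emph{not} a thresholding gate: by its definition in \cref{appendix:gate_defintions} it is merely the cyclic permutation $\ket{x_1,\dots,x_m}\mapsto\ket{x_2,\dots,x_m,x_1}$. Likewise, there is no ``ancilla-like output register'' on which the running sum is accumulated block by block; the accumulation happens entirely on the \emph{last} (root) qubit, and it happens because the poor-man's cat structure of $\ket{\BPM{n}}$ makes the post-measurement state of that qubit equal to \cref{eq:psi}. The local $m$-qubit blocks never need to ``see'' the global sum --- that nonlocality is supplied by the initial entanglement, not by any carry-free counting trick.

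Second, and more seriously, you have missed the actual two-step structure of the argument. The natural gate that produces \cref{eq:psi} exactly is $A_{m,\theta}$, which is \emph{non-unitary}; the circuit built from $A_{m,\theta}^\dagger$ gates (\cref{fig:A_m_circuit}) gives a distribution $P$ satisfying the bound of \cref{thm:approx-pmmajmod}, and this is where your Riemann-sum calculation lives (the $\frac{1}{2p}$ discretisation and the $O(p^{3/2}e^{-n/4p^2})$ term from the approximate uniformity of $k\bmod p$ via \cref{fact:puniform}). The unitary $U_{m,\theta}$ is then obtained by Gram--Schmidt orthonormalisation of the columns of $A_{m,\theta}$, and \cref{lem:A_and_U_closeness} gives $\|A_{m,\theta}-U_{m,\theta}\|_\infty\in O(n^{-mc})$. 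The choice $m=\lceil 2/c+1\rceil$ is made precisely so that, after replacing $O(n)$ many $A$-gates by $U$-gates, the accumulated replacement error is $O(n)\cdot O(n^{-mc})=O(n^{-c-1})$; a triangle inequality through $P$ then yields the theorem. So the ``obstacle'' you anticipated --- local gates failing to see a global sum --- does not exist, while the obstacle you did not mention --- that the ideal local gate is non-unitary and must be approximated --- is the entire reason the block size $m$ appears at all.

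(Minor point: your Riemann-sum line computes $\frac{1}{2\pi}\int_0^{2\pi}\frac12|\sin\theta|\,d\theta=\frac1\pi$, which is the average \emph{advantage} $|p_{\mathrm{correct}}-\tfrac12|$; the TVD is then $\tfrac12-\tfrac1\pi$, not the integral itself.)
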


The quantum circuit to generate this approximate distribution $(Z', Y)$ is given in \cref{fig2} a). The Born distribution $(Z', Y)$ generated by this circuit is our target distribution.

To get a sense of why \cref{thm:distr_approx} holds, note that the state vector of the last qubit in \cref{fig2}) a),
after measuring the first $2n-2$ qubits in the state vector $\ket{d,x}$, is \emph{approximately} given by
\begin{align} \label{eq:psi}
    \ket{\psi} &\approx \cos\left(-\frac{\pi}{4} + \frac{\pi}{p} \left(\sum_{i=0}^{n-1} x_{i} (-1)^{h(d)_{i}} + s\right) \right)
\nonumber\\
    &\times \ket{\parity(x)} \nonumber \\
    &+ \sin\left(-\frac{\pi}{4} + \frac{\pi}{p} \left(\sum_{i=0}^{n-1} x_{i} (-1)^{h(d)_{i}} + s\right) \right)
    \nonumber \\
    &\times\ket{\overline{\parity(x)}}.
\end{align}
And from \cref{fig2}.c), it can be seen that, to good approximation, $\cos^2(-\frac{\pi}{4} + \frac{\pi}{p}(k + s))$ is inversely correlated with $\majmod{p,s}(k)$.
This is the desired property to approximate $(Z, \pmmajmod{p,s}(Z))$ as we want the last bit to equal $\parity(x)$ when $\majmod{p,s}(k)=0$ and $\overline{\parity(x)}$ otherwise.

Now, we are in a position to define our \emph{target distribution class} as
\begin{equation}
    \mathcal{D} := \{D_{n,p,s} \mid n \in \mathbb{N} \text{, } p \in \mathbb{Z}^+ \text{, } s \in \Field_p \text{, } p \text{ is prime}\}.
\end{equation}
The distribution $D_{n,p,s}$ is given by the Born distribution of the unitary constant-depth quantum circuit in \cref{fig2} a).
A direct corollary of this and \cref{thm:Binary_Tree_Poor_Man_Construction} is the following statement.

\begin{cor}[Optimal quantum circuit]
    The optimal quantum circuit to generate the distribution $D_{n,p,s}$ is at most constant-depth.
\end{cor}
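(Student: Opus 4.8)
The plan is to observe that the statement is essentially immediate from the \emph{definition} of $D_{n,p,s}$ together with \cref{thm:Binary_Tree_Poor_Man_Construction}, so the ``proof'' amounts to accounting for the depth of each stage of the circuit in \cref{fig2}~a). First I would recall that, by definition, $D_{n,p,s}$ is exactly the Born distribution obtained by running the unitary circuit of \cref{fig2}~a) on the $(2n-1)$-qubit all-zeros input and measuring in the computational basis. Hence it suffices to exhibit a single constant-depth quantum circuit (of one- and two-qubit gates) that realizes this distribution exactly; the minimum-depth, i.e.\ ``optimal'', such circuit then trivially has depth bounded by that constant.

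Next I would decompose the circuit of \cref{fig2}~a) into three stages and bound the depth of each. (i) Preparation of the input state vector $\ket{\BPM{n}}$ on $2n-1$ qubits: by \cref{thm:Binary_Tree_Poor_Man_Construction} this is a depth-$3$ circuit of one- and two-qubit gates. (ii) Application of the gate layers built from $U_{m,\pi/p}$ and $C_m$: these gates act on blocks of $m = \lceil 2/c + 1 \rceil$ qubits, and $m$ depends only on the fixed constant $c \in (0,1/3)$, hence is a constant independent of $n$; any unitary on a block of $m$ qubits decomposes into a number of one- and two-qubit gates depending only on $m$, the blocks within a layer act on disjoint qubit sets so each layer is $O(1)$ deep, and there are only $O(1)$ such layers. (iii) The terminal computational-basis measurement contributes no circuit depth. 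Summing, the total depth is $3 + O(1) = O(1)$, independent of $n$.

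Finally I would conclude: since $D_{n,p,s}$ is generated \emph{exactly} by a quantum circuit of depth $O(1)$ consisting of one- and two-qubit gates, the optimal (minimal-depth) quantum circuit generating $D_{n,p,s}$ has depth at most this constant, proving the corollary. The only point requiring any care is verifying that the constant-sized block gates $U_{m,\pi/p}$ and $C_m$ (defined in \cref{appendix:gate_defintions}) are genuinely implementable in constant depth over the one- and two-qubit gate set and that they tile the register in $O(1)$ layers; this is immediate from their locality and the constancy of $m$, so there is no substantive obstacle beyond this bookkeeping.
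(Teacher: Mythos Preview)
Your proposal is correct and takes essentially the same approach as the paper, which treats the corollary as immediate from the definition of $D_{n,p,s}$ (as the Born distribution of the circuit in \cref{fig2}~a)) together with \cref{thm:Binary_Tree_Poor_Man_Construction}; you simply spell out the depth bookkeeping that the paper leaves implicit. The only minor omission is that the circuit also includes a layer of Hadamards on the vertex qubits and the single-qubit rotation $\exp(-iX(\tfrac{\pi}{4}-\tfrac{\pi s}{p}))$ on the last qubit, but these are depth-$1$ and do not affect your conclusion.
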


\subsection{PAC generator learning algorithm for $\mathcal{D}$ with $\rm{QNC}^0$} \label{subsec:learning_alg}
We now proceed by presenting an explicit learning algorithm that learns $s$ from examples of $D_{n,p,s}$.
From the knowledge of $s$, one can then trivially construct a generator for $D_{n,p,s}$ in $\rm{QNC}^0$, by simply using the quantum circuit in \cref{fig2} a).

\begin{theorem}[PAC generator learning algorithm of the quantum distribution] \label{thm:learn_alg}
    There exists a polynomial-time PAC generator learning algorithm for $\mathcal{D}$, that for sufficiently large $p \in O(n^{1/3})$, any $\delta > 0$, and any $s \in \Field_p$, outputs a description of a constant-depth quantum circuit whose Born distribution $D'$ satisfies
   \begin{equation}
        \TVD(D_{n,p,s}, D') = 0
    \end{equation}
    with probability $1-\delta$, and which uses $M~\in~O\left( p^4 \log\left( \frac{p}{\delta} \right) \right)$ examples of $D_{n,p,s}$.
\end{theorem}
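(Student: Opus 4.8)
The plan is to reduce the task to \emph{exactly} identifying the shift $s \in \Field_p$ from the examples. Once $s$ is recovered, the learner outputs a description of the constant-depth circuit of \cref{fig2}a) instantiated with that $s$; since $D_{n,p,s}$ is \emph{by definition} the Born distribution of precisely this circuit, we obtain $D' = D_{n,p,s}$ and hence $\TVD(D_{n,p,s},D') = 0$ deterministically. That circuit has constant depth by construction --- it is the circuit of \cref{fig2}a), whose $\ket{\BPM n}$ input is constant-depth by \cref{thm:Binary_Tree_Poor_Man_Construction} --- its description has size $\mathrm{poly}(n)$, and the remaining work is classical post-processing. It therefore suffices to give an estimator $\widehat s$ of $s$ that is correct with probability at least $1-\delta$ from $O(p^4\log(p/\delta))$ examples.

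For the estimator I would score each candidate shift $s'$ by the empirical agreement between the data and its prediction. Given an example $\big((d,x),Y\big)\sim D_{n,p,s}$, the learner computes in polynomial time the integer $k := \sum_i x_i(-1)^{\bpathsum(d)_i}$ --- here $\bpathsum$ is the linear bijection attached to the tree of \cref{fig2}b), efficiently evaluable --- together with the bit $B := Y\oplus\parity(x)$, which should track $\majmod{p,s}(k)$. For each $s'\in\{0,\dots,p-1\}$ set $\widehat A(s') := \frac1M\sum_{j=1}^M \indexf\big[\,B^{(j)} = \majmod{p,s'}(k^{(j)})\,\big]$ and output $\widehat s := \arg\max_{s'}\widehat A(s')$. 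At the population level $\Exp[\widehat A(s)] = \Prob_{(Z,Y)\sim D_{n,p,s}}[\,Y = \pmmajmod{p,s}(Z)\,] \ge 1 - \TVD\big((Z,Y),(Z,\pmmajmod{p,s}(Z))\big) \ge \tfrac12 + \tfrac1\pi - O(n^{-c})$ by \cref{thm:distr_approx}. Granting for the moment that, conditioned on $k\bmod p \equiv r$, the bit $B$ equals $0$ with probability $\cos^2\!\big(-\tfrac\pi4 + \tfrac\pi p(r+s)\big) + O(n^{-c})$ uniformly in $r$ (this is \eqref{eq:psi}, sharpened; see below), and using that $k$ --- a sum of $n$ i.i.d.\ steps valued in $\{-1,0,1\}$ with variance $\Theta(n)$ --- is near-uniform modulo $p$ when $p\in O(n^{1/3})$ (the relevant character sums are $(1-\Theta(1/p^2))^n = e^{-\Theta(n^{1/3})}$), a short calculation gives $\Exp[\widehat A(s')] = \tfrac12 + \tfrac1\pi\cos\!\big(\tfrac{2\pi}p(s-s')\big) + O(n^{-c})$ for \emph{every} $s'$. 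Hence the population maximiser is $s'=s$, and the runner-up $s'=s\pm1$ trails by $\tfrac1\pi\big(1 - \cos\tfrac{2\pi}p\big) = \Theta(1/p^2)$.

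The sample-complexity bookkeeping is now routine: each $\widehat A(s')$ is an average of $M$ i.i.d.\ $\{0,1\}$ variables, so Hoeffding and a union bound over the $p$ candidates give $\Prob\big[\,\exists\, s'\colon |\widehat A(s') - \Exp\widehat A(s')| > t\,\big] \le 2p\,e^{-2Mt^2}$. Choosing $t$ equal to a small constant times $p^{-2}$, below a third of the $\Theta(1/p^2)$ population gap, forces $\widehat s = s$ and requires $M = O(p^4\log(p/\delta))$ examples; then $\TVD(D_{n,p,s},D')=0$ with probability $1-\delta$, as claimed.

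The step I expect to be genuinely delicate is separating $s$ from its \emph{nearest} competitors $s\pm1$ --- and, more generally, the $O(1)$ shifts within constant distance: there the population gap is only $\Theta(1/p^2)$, whereas the error by which the actual constant-depth circuit departs from the idealised profile \eqref{eq:psi} is $O(n^{-c})$, and for $p\in O(n^{1/3})$ these are of the same polynomial order. A black-box appeal to \cref{thm:distr_approx}, which controls only the aggregate agreement at the \emph{true} shift, does not settle this comparison: flipping the predicted label on the $O(1/p)$-fraction of residues straddling a window boundary changes the agreement only to second order there, since the ideal bias $\cos^2(-\tfrac\pi4 + \tfrac\pi p(r+s))$ equals $\tfrac12$ exactly at a boundary. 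To close the argument one must reopen the construction of Ref.~\cite{WattsParham} and show that the per-residue deviation $\xi_r := \Prob[\,B=0\mid k\equiv r\,] - \cos^2\!\big(-\tfrac\pi4+\tfrac\pi p(r+s)\big)$ is uniformly $O(n^{-c})$ \emph{and} small with a favourable constant on the handful of residues flanking the two boundaries of the $\majmod{p,s}$ window, so that the $\Theta(\Delta^2/p^2)$ ideal separation of a shift at distance $\Delta$ dominates the $O(\Delta/p^2)$ contribution of $\xi$ even for the smallest $\Delta$; for $\Delta$ beyond a suitable constant this is automatic. That structural refinement, together with the equidistribution rate of $k\bmod p$ and the concentration estimate above, is the only non-routine ingredient.
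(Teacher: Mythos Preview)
Your route is genuinely different from the paper's. You score every candidate shift $s'\in\Field_p$ by the empirical agreement $\widehat A(s')$ and take the argmax; the paper instead estimates the $p$ values $f(k)=\cos^2(-\tfrac\pi4+\tfrac\pi p(k+s))$ coordinate-wise, then locates the unique $k^*$ where $f(k^*)=\tfrac12$ and $f(k^*+1)>\tfrac12$, recovering $s=p-k^*\bmod p$. The paper needs only $\Theta(1/p)$ precision per coordinate (since consecutive values of $f$ differ by $\sim\pi/p$ near the crossing), which it gets from a multivariate mean estimator with $O(p^2/\epsilon^2\log(p/\delta))$ samples at $\epsilon=\Theta(1/p)$. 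Your argmax needs $\Theta(1/p^2)$ precision on each $\widehat A(s')$ (your own gap computation), which Hoeffding delivers with the same $O(p^4\log(p/\delta))$. So the accounting matches, but the paper's version is less delicate on the systematic-error side.

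The gap in your write-up is exactly where you flag it: you need the population scores $\Exp_Q[\widehat A(s')]$ to be within $o(1/p^2)$ of the ideal $\tfrac12+\tfrac1\pi\cos\tfrac{2\pi(s-s')}{p}$, but you only invoke the $O(n^{-c})$ bound of \cref{thm:distr_approx} with $c\in(0,1/3)$, which is \emph{larger} than $1/p^2\sim n^{-2/3}$. Your proposed rescue via per-residue constants near the window boundaries is both unproved and unnecessary. The clean fix is already in the paper's Appendix~A: by choosing $m=\lceil 2/c+1\rceil$ one has $\TVD(P,Q)=O(n^{-c-1})=O(1/(np))$ (see \cref{eq:bound_for_TVPQ}), where $P$ is the ideal circuit with $A_{m,\theta}$ gates under which the conditional law of $B$ given $k$ is \emph{exactly} the $\cos^2$ profile. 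This bounds $|\Exp_Q[\widehat A(s')]-\Exp_P[\widehat A(s')]|\le\TVD(P,Q)=O(1/(np))=O(1/p^4)$ uniformly in $s'$, and together with the $\sqrt p\,e^{-\Theta(n/p^2)}$ non-uniformity of $k\bmod p$ (\cref{fact:puniform}) closes your argmax argument without any boundary analysis. The paper's profile-crossing approach avoids this subtlety altogether because its $\Theta(1/p)$ target precision is comfortably above both systematic errors.
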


The full proof can be found in \Cref{appendix:learn_alg}, but we give a high level overview of the ideas.
We are given $M$ examples from the discrete distribution $D_{n,p,s}$ and $n,p$. 
From \Cref{eq:psi}, we know that under some approximate distribution $P$, conditioned on a given value of $(d,x)$, the probability to observe $(d,x,\parity(x))$ is $f(k) = \cos^2(-\frac{\pi}{4} + \frac{\pi}{p} \left(k + s\right))$ for
\begin{equation}
k=\sum_{i=0}^{n-1} x_{i} (-1)^{h(d)_{i}} \bmod p.
\end{equation}
The strategy for the learning algorithm is essentially to estimate $f(k)$, for $k = 1, \ldots, p$, from the $M$ examples drawn from $D_{n,p,s}$. Then, as depicted in \cref{fig2}.c), we can identify $k^* = p-s \bmod p$ as the coordinate for which $f(k^*) = 1/2$ and $f(k^*+1)>1/2$. This way, we recover $s = p - k^* \bmod p$. The only difficulty is in computing the values $f(k)$ to sufficient precision from examples.
We show that $M=O\left( p^4 \log\left( \frac{p}{\delta} \right) \right)$ examples are sufficient to guarantee enough precision to identify $s$ with probability $1-\delta$.
Then we can output the quantum circuit in \cref{fig2} a),
as the generator in $\rm{QNC}^0$.
Since this circuit is exactly the circuit that generated the target distribution, we obtain an error $\epsilon =0$.

\subsection{Classical hardness} \label{subsec:classical_hardness}
To show the classical hardness of PAC generator learning $\mathcal{D}$ with $\rm{NC}^0$, it suffices to show that, even for a fixed value of $s$, the distributions $D_{n,p,s}$ are not in $\rm{NC}^0$. 
We thus can directly take the hardness result of Theorem 34 from Ref.~\cite{WattsParham}.

\begin{thm}[Classical hardness of PAC learning] 
    \label{thm:classical_LB}
    For each $\delta < 1$, there exists an $\epsilon >0$ such that for all sufficiently large even integer $N = 2n-2$ and prime number $p = \Theta(N^\alpha)$ for $\alpha \in (\delta/3, 1/3)$: Let $f:\{0,1\}^\ell \to \{0,1\}^{N+1}$ be an $(\epsilon \log N)^{1/2}$-local function, with $\ell \leq N + N^\delta$. Then $\TVD(f(U), (Z, \pmmajmod{p,s=0}(Z))) \geq 1/2 - O(1/\log N)$.
\end{thm}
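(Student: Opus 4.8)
The plan is to reduce \cref{thm:classical_LB} to Theorem~34 of Ref.~\cite{WattsParham}, of which it is the special case $s=0$. Concretely, I would first observe that $\pmmajmod{p,0}$ is exactly the ``poor man's majority mod $p$'' function used in Ref.~\cite{WattsParham}, so that the distribution $(Z,\pmmajmod{p,s=0}(Z))_{Z\sim U}$ coincides with the target distribution of their sampling lower bound, and the statement to be proven is precisely their locality bound. The only translation step is the standard light-cone observation that a fan-in-$2$ classical circuit of depth $D$ computes a $2^{D}$-local function; hence an obstruction against $(\epsilon\log N)^{1/2}$-local functions is in particular an obstruction against all constant-depth bounded-fan-in generators (and, more sharply, against all depths below $\approx\tfrac12\log\log N$), which is what \cref{thm:classical_LB} records.

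For the locality bound itself, I would proceed as in Ref.~\cite{WattsParham}. Writing $c:=\pmmajmod{p,0}$ and decomposing the output of $f$ into the first $N$ coordinates $Z=g_Z(u)$ and the last coordinate $Y=g_Y(u)$, closeness in total variation to $(Z,c(Z))_{Z\sim U}$ forces $g_Y(u)=c(g_Z(u))$ on at least a $\tfrac12+\eta$ fraction of inputs $u$, where $\eta$ is the claimed TVD gap, while simultaneously the induced law of $Z=g_Z(U)$ must be close to uniform on $\bsn$ with $N=2n-2$. Since $\ell\le N+N^{\delta}$, the map $g_Z$ is then forced to behave like a near-bijection, in the sense that fixing any $k=(\epsilon\log N)^{1/2}$ input coordinates still leaves $g_Z$ of the remaining bits spread out over essentially all of $\{0,1\}^{N}$. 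The claim then reduces to showing that $c\circ g_Z$ has correlation $o(1/\log N)$ with every $k$-junta of $u$, since $g_Y$ is such a junta.

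To establish that, I would fix a coordinate set $S$ of size $k$ and bound, for a typical assignment of $u_S$, the bias of $c(g_Z(u_S,U_{S^c}))$ as $U_{S^c}$ ranges uniformly. Using the near-bijectivity, this pushes forward to a near-uniform mixture of distributions on $(d,x)$ in which $\Omega(N)$ of the vertex/edge bits are still free and roughly independent. The core estimate is then a mod-$p$ anticoncentration bound for $k^{\star}=\sum_{i}x_i(-1)^{\bpathsum(d)_i}$: because $p=\Theta(N^{\alpha})$ with $\alpha<1/3$ we have $p^{2}=o(N)$, so a character-sum / local-central-limit argument shows that $k^{\star}\bmod p$ is $o(1)$-close to uniform on $\Field_p$ conditioned on the $\le k$ fixed bits, whence $\majmod{p,0}(k^{\star})$ and therefore $c$ are $o(1)$-close to an unbiased bit (the $\parity(x)$ term only helps). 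Collecting the error terms over the range $\alpha\in(\delta/3,1/3)$ and choosing $\epsilon$ small enough yields $\TVD(f(U),(Z,\pmmajmod{p,s=0}(Z)))\ge\tfrac12-O(1/\log N)$.

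The step I expect to be the main obstacle is the near-bijectivity / leakage-control argument of the second paragraph: an $\mathrm{NC}^0$ generator is not obliged to output $Z$ exactly uniformly, and the $N^{\delta}$ spare input bits together with the unbounded fan-out of the tree structure (flipping a single edge bit $d_j$ flips $\bpathsum(d)_i$ for an entire subtree) mean that a few input coordinates can touch many output coordinates and many path parities at once. Quantifying that this leakage still cannot be concentrated into a $(\epsilon\log N)^{1/2}$-junta, with the residual bias kept at $O(1/\log N)$ rather than merely $o(1)$, is exactly the combinatorial heart of Ref.~\cite{WattsParham}, and the condition $\alpha<1/3$ is precisely what leaves a polynomial margin ($p^{2}=o(N)$ with room to spare) for that bookkeeping. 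Because only $s=0$ is needed here, no ingredient beyond their Theorem~34 is required, and the proof is a direct appeal to it.
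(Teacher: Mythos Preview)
Your proposal is correct and matches the paper exactly: \cref{thm:classical_LB} is Theorem~34 of Ref.~\cite{WattsParham} at $s=0$, and the paper simply cites it without reproof. Your informal sketch of the underlying Watts--Parham argument (near-bijectivity and junta correlation) differs in flavor from their actual technique---a forest partition of the binary tree combined with a three-part statistical test in the style of Viola~\cite{viola}---but since neither you nor the paper reproves the cited result, this is immaterial to the reduction.
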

For a \emph{$d$-local function}, each output bit depends on at most $d$ input bits.
Intuitively, the theorem above states that there exist no constant-depth classical circuits that can asymptotically sample from the distribution $(Z, \pmmajmod{p,s=0}(Z))$ up to good fidelity, as constant-depth circuit are $O(1)$-local.
The proof for the theorem above in Ref.~\cite{WattsParham} originally stems from the hardness proof in Ref.~\cite{viola} with the major modification of considering the ``poor man's majority mod $p$'' function instead of the ``majority mod $p$'' function. Their modification accommodates the fact that the terms in the ``poor man's majority mod $p$'' function no longer depend on disjoint variables by partitioning the balanced binary tree associated with $Z$ into sub-trees, and then identify sub-trees corresponding to output variables which are independent when a large chunk of the input variables are fixed.

\section{Results}

We are now in a position to state our main result, which is a meaningful advantage of a shallow quantum circuit over instances of classical circuits in a well-defined learning task.
\begin{theorem} [Advantage of shallow quantum hypotheses]
    For the problem of learning a constant-depth generator for the distribution class $\mathcal{D}$, the hypothesis choice of $\rm{QNC}^0$ over $\rm{NC}^0$ yields an advantage in total variation distance of at least 
    $$
    \frac{1}{\pi} - O\left(\frac{1}{\log n}\right).
    $$
\end{theorem}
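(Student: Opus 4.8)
The plan is to obtain the gap by combining the exact quantum learnability of \cref{thm:learn_alg} with the classical sampling hardness of \cref{thm:classical_LB}, using the approximation bound of \cref{thm:distr_approx} together with one triangle inequality to transfer the hardness from the combinatorial distribution $(Z,\pmmajmod{p,0}(Z))$ to our target distribution $D_{n,p,0}$. The first step is to pin down a joint parameter regime. Fix a constant $\alpha\in(\delta/3,1/3)$, which is possible since $\delta<1$, and for each large $n$ let $p$ be a prime with $p=\Theta(n^{\alpha})$, which exists by the density of the primes. Writing $N=2n-2$, this single choice simultaneously meets $p\in O(n^{1/3})$ as demanded by \cref{thm:learn_alg}, $p=\Theta(N^{\alpha})$ with $\alpha\in(\delta/3,1/3)$ as demanded by \cref{thm:classical_LB}, and $p=\Theta(n^{c})$ with $c=\alpha\in(0,1/2)$ as demanded by \cref{thm:distr_approx}; moreover the sample count $M\in O(p^{4}\log(p/\delta))$ of \cref{thm:learn_alg} is then $\mathrm{poly}(n)$, so the quantum learner is efficient.

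For the quantum side, \cref{thm:learn_alg} provides, for every $s\in\Field_p$ and every $\delta>0$, a polynomial-time PAC generator learner that with probability at least $1-\delta$ outputs the description of the constant-depth quantum circuit of \cref{fig2} a), whose Born distribution $D'$ equals $D_{n,p,s}$ exactly, so $\TVD(D_{n,p,s},D')=0$. Hence the worst-case accuracy attainable over the class $\mathcal{D}$ with $\rm{QNC}^0$ hypotheses is $\epsilon_{\mathrm{Q}}=0$.

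For the classical side it suffices to exhibit a single hard instance, and I would take $D_{n,p,0}\in\mathcal{D}$. A $\rm{NC}^0$ generator is a constant-depth classical circuit of fan-in $2$ with $2n-1=N+1$ output bits; constant depth and bounded fan-in force each output bit to depend on $2^{O(1)}=O(1)$ input bits, so the generator computes an $(\epsilon\log N)^{1/2}$-local function for $N$ large, and (after discarding unused random bits, as in the convention of \cref{thm:wattsparham_thm3}) its input length may be taken to meet the $N+N^{\delta}$ budget. Thus \cref{thm:classical_LB} applies and gives $\TVD(G(U),(Z,\pmmajmod{p,0}(Z)))\ge \tfrac12-O(1/\log n)$ for every such generator $G$. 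Combining this with \cref{thm:distr_approx}, which gives $\TVD(D_{n,p,0},(Z,\pmmajmod{p,0}(Z)))\le \tfrac12-\tfrac1\pi+O(n^{-c})$, the triangle inequality yields
\begin{align}
\TVD(G(U),D_{n,p,0})
&\geq \TVD(G(U),(Z,\pmmajmod{p,0}(Z))) \nonumber\\
&\quad -\TVD(D_{n,p,0},(Z,\pmmajmod{p,0}(Z))) \nonumber\\
&\geq \frac{1}{\pi}-O\left(\frac{1}{\log n}\right),
\end{align}
where $O(n^{-c})$ was absorbed into $O(1/\log n)$. Hence no $\rm{NC}^0$ hypothesis lies within $\tfrac1\pi-O(1/\log n)$ of $D_{n,p,0}$, so every learner restricted to $\rm{NC}^0$ incurs worst-case error $\epsilon_{\mathrm{C}}\ge \tfrac1\pi-O(1/\log n)$ irrespective of its output; subtracting $\epsilon_{\mathrm{Q}}=0$ gives the claimed advantage.

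The step requiring the most care is not the triangle inequality but the two identifications that feed it. On the classical side one must argue that every member of the hypothesis class $\rm{NC}^0$ — no matter how many random bits it nominally consumes — genuinely falls within the scope of \cref{thm:classical_LB}: that constant depth plus bounded fan-in really does force $O(1)$-locality, and that one may reduce to an input length within the $N+N^{\delta}$ budget without loss of generality; this is exactly the convention under which \cref{thm:wattsparham_thm3} and \cref{thm:classical_LB} are stated, and spelling it out for arbitrary $\rm{NC}^0$ generators is the delicate point. On the quantum side one should check that the circuit produced by \cref{thm:learn_alg} is literally the circuit defining $D_{n,p,s}$, so that the learning error is truly $0$ rather than merely $o(1)$, and that the $O(n^{-c})$ slack in \cref{thm:distr_approx} is dominated by $1/\log n$ for the chosen $p=\Theta(n^{\alpha})$, which holds since $n^{-\alpha}=o(1/\log n)$.
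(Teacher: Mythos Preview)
Your proposal is correct and follows essentially the same approach as the paper: both combine \cref{thm:distr_approx} and \cref{thm:classical_LB} via the (reverse) triangle inequality to lower-bound $\TVD(G(U),D_{n,p,s})$ by $\tfrac{1}{\pi}-O(1/\log n)$, and then invoke \cref{thm:learn_alg} to conclude the quantum learner achieves error $0$. If anything, your version is more careful than the paper's in pinning down a joint parameter regime, in specializing to $s=0$ to match the literal statement of \cref{thm:classical_LB}, and in flagging the input-length and locality conditions needed to bring arbitrary $\rm{NC}^0$ generators within its scope.
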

\begin{proof}
    Due to \cref{thm:distr_approx}, we know that
    \begin{align} \centering
        &\TVD(D_{n,p,s}, (Z, \pmmajmod{p,s}(Z))) \\
        &\leq \frac{1}{2} - \frac{1}{\pi} + O(n^{-c})
        \nonumber
    \end{align}
    for some constant $c \in (0, 1/3)$, and due to \cref{thm:classical_LB} we know that
    \begin{align}
        &\TVD(f(U), (Z, \pmmajmod{p, s}(Z))) \\
        &\geq 1/2 - O\left(\frac{1}{\log N}\right), 
        \nonumber
    \end{align}
    where $f$ is any $(\epsilon \log N)^{1/2}$-local function.
    It follows from the reverse triangle inequality that
    \begin{align} \centering
        &\TVD(f(U), D_{n,p,s}) \nonumber\\
        &\geq  \frac{1}{\pi} - O(n^{-c}) - O\left(\frac{1}{\log N}\right) \nonumber\\
        &\geq  \frac{1}{\pi} - O\left(\frac{1}{\log n}\right).\label{eq:advantage_gap}
    \end{align}
    Since $D_{n,p,s}$ is generated by the constant-depth quantum circuit in \cref{fig2} a), and we can learn that circuit from examples to perfect precision, due to \cref{thm:learn_alg}, we obtain the advantage gap as in \cref{eq:advantage_gap} for learning a generator for the target distribution class $\mathcal{D}$ using constant-depth quantum circuits instead of constant-depth classical circuits.
\end{proof}

\section{Summary and discussion}

One of the core questions of research on quantum computing is concerned with the question whether one can meaningfully identify a quantum advantage in near-term devices that relates to a problem that has a practical flavor. In this work, we have identified a learning problem in which well-defined families of constant-depth quantum circuits feature an advantage over families of constant-depth classical circuits. This problem is a meaningful and mathematically well-defined machine learning problem in the PAC generator learning sense. 

It is important to note that 
learning with constant-depth hypotheses is also regularly studied in classical learning theory \cite{ding_PAC}.
In fact, the concept class studied in this work is polynomial in size (in the problem parameters), which is also a common setting for showing learning advantages in the classical PAC literature \cite{mossel_juntas}. In this way, the separation proven here between quantum and classical learners naturally ties in with 
the literature on classical mathematical learning theory.

So our work answers the question whether quantum advantages in machine learning problems with short quantum circuits can be found to the affirmative. This leaves a lot of room for optimism on the use of quantum algorithms in the medium term, in particular in the light of the fact that the known results on limitations of non-error corrected quantum circuits would set in at logarithmic or poly-logarithmic depth \cite{FG20,PRXQuantum.3.040329}: Here, we are resorting to constant-depth quantum circuits.

That said, this result is a stepping stone in a bigger program. Ideally, one should be able to show advantages also over deeper classical circuits. Also, while the problem considered is a meaningful problem and has a practical machine learning flavor to it, it still makes use of highly structured data, similar to other work showing quantum advantages in the learning context \cite{PACLearning,DensityModelling,TemmeML,JerbiRL}. Steps have also been taken to deal with realistic and unstructured data \cite{B01Liu}, but again, such algorithms will presumably eventually require a fault tolerant quantum computer. 
It is also worth noting that while being related, the question discussed here is different from the question of actually
learning shallow quantum circuits \cite{LearningShallowCircuits,Vasconcelos}, even though it will be interesting to further flesh out the precise connections.
More work needs to be done to bring such ideas closer to reality and the realm of unstructured data. It is the hope that the present work can serve as a further inspiration along these lines of thought.

\section*{Acknowledgements}

We would like to thank Adam Bene Watts and Natalie Parham for a discussion over the compilation of the circuits used in this and their work. We thank the Einstein Foundation (Einstein Research Unit on Quantum Devices), for which this is a joint node work, Berlin Quantum, the DFG (CRC 183), the BMBF (Hybrid), the BMWK (PlanQK, EniQmA), the Munich Quantum Valley, the QuantERA (HQCC), the Quantum Flagship (PasQuans2, Millenion), and the ERC (DebuQC) for support.

%

\newpage
\onecolumngrid

\appendix

\section{Defining the $U_{m, \theta}$ and $C_m$ gates} \label{appendix:gate_defintions}

To fully understand the construction of the $U_{m, \theta}$ and $C_m$ gates, we need to stride out a little bit.
Firstly, let us consider a different gate, namely, the $A_{m,\theta}$ gate.
The $A_{m,\theta}$ gate is defined by its action on the computational basis state vectors as
\begin{equation}
    A_{\theta, m} \ket{x_1 ,x_2, \ldots ,x_m} = \exp(i\theta x_m X) \ket{x_1} \otimes \exp(i\theta x_1 X) \ket{x_2} \otimes \ldots \otimes \exp(i\theta x_{m-1} X) \ket{x_m} .
\end{equation}
Ref.\ \cite{WattsParham} shows that a constant-depth circuit with $A_{m, \theta}$ gates can approximate the $(Z, \pmmajmod{n,p}(Z))$ distribution.
Namely, the following statement follows from Lemma 10 and Theorem 31 of Ref.\ \cite{WattsParham} and a trivial generalization to $s$.

\begin{thm}[Generalization of a statement of Ref.\ \cite{WattsParham}]\label{thm:approx-pmmajmod}
    For any $p \in \mathbb{Z}^+$ there is a constant-depth circuit consisting of one and two-qubit unitary gates and 
    $A_{m,\theta}^{\dagger}$ operations (for arbitrary $m\geq 1$) which takes the $(2n-1)$-qubit all zeros state as input and produces an output state, which when measured in the computational basis, produces a distribution $P$ with samples of the form $(Z', Y)$, such that
    \begin{equation}
        \Delta(P, (Z, \pmmajmod{p,s}(Z)) \leq \frac{1}{2} - \frac{1}{\pi} + \frac{1}{2p} + O(p^{3/2}e^{-n/4p^2}).
    \end{equation}
\end{thm}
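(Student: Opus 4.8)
The plan is to reduce the statement to Lemma~10 and Theorem~31 of Ref.~\cite{WattsParham}, which establish it for the unshifted function $\pmmajmod{p,0}$, and then to check that turning on the hyperplane $s$ costs nothing. The circuit is built in two constant-depth stages. First, prepare $\ket{\BPM{n}}$ on $2n-1$ qubits in depth $3$ via \cref{thm:Binary_Tree_Poor_Man_Construction}, with the $n-1$ edge qubits carrying the $d$-register. Second, apply the constant-depth layer of one- and two-qubit gates together with $A_{m,\theta}^{\dagger}$ gates at angle $\theta=\pi/p$ from Ref.~\cite{WattsParham} (the block size $m\ge 1$ is immaterial here, since the idealized $A$ gate is used exactly). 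On the computational branch labelled by $z=(d,x)$ this network drives the designated output qubit into the pure state $\ket{\psi}$ of \cref{eq:psi} with $s=0$; to obtain a general $s$, precede the terminal measurement by the fixed single-qubit rotation $\exp\!\big(i\tfrac{\pi s}{p}X\big)$ on that qubit, which shifts the argument of $\ket{\psi}$ by exactly $s$ while changing neither the depth nor the block structure in the $(d,x)$ basis. A computational-basis measurement of all $2n-1$ qubits then outputs $(Z',Y)$ with $Z'$ uniform on $\{0,1\}^{2n-2}$ (as in the $s=0$ construction of Ref.~\cite{WattsParham}) and, conditioned on $Z'=z=(d,x)$, the bit $Y$ equal to $\parity(x)$ with probability $f(k(z))$ and to $\overline{\parity(x)}$ otherwise, where $f(k)=\cos^2\!\big(-\tfrac\pi4+\tfrac\pi p(k+s)\big)$ and $k(z)=\sum_{i}x_i(-1)^{h(d)_i}\bmod p$.

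Since the $Z'$-marginal of $P$ coincides with that of $(Z,\pmmajmod{p,s}(Z))$, the total variation distance is the $Z'$-average of a single-bit total variation distance,
\begin{equation}
  \Delta\big(P,(Z,\pmmajmod{p,s}(Z))\big)=\Exp_{z\sim U}\big[\,g(k(z))\,\big],
\end{equation}
with $g(k)=1-f(k)$ if $\majmod{p,s}(k)=0$ and $g(k)=f(k)$ if $\majmod{p,s}(k)=1$; note $g$ depends on $k$ only through $(k+s)\bmod p$. I would split this average into an ideal part $\tfrac1p\sum_{r=0}^{p-1}\tilde g(r)$, where $\tilde g$ is $g$ written as a function of $r=(k+s)\bmod p$, plus a correction of order the total variation distance between the law of $k(z)\bmod p$ and the uniform law on $\mathbb{Z}/p\mathbb{Z}$. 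The ideal part is independent of $s$ by translation invariance of the uniform average; it is a $p$-point Riemann sum of $\phi\mapsto\sin^2(\phi-\tfrac\pi4)$ on $[0,\tfrac\pi2)$ and $\phi\mapsto\cos^2(\phi-\tfrac\pi4)$ on $[\tfrac\pi2,\pi)$, whose average over the circle is $\tfrac12-\tfrac1\pi$ by an elementary integration and whose Riemann-sum error is at most $\tfrac1{2p}$ since $|g'|\le1$.

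For the correction term I would invoke the Fourier/character-sum anti-concentration estimate of Ref.~\cite{WattsParham}: bounding the distance to uniform by $\tfrac12\sum_{j=1}^{p-1}\big|\Exp_z[\omega^{jk(z)}]\big|$ with $\omega=e^{2\pi i/p}$, each character expectation is exponentially small in $n/p^2$ because $k(z)$ is (essentially) a sum of $\Theta(n)$ bits, so that a union bound over $j$ together with the elementary estimate $|\cos(\pi/p)|\le e^{-\Omega(1/p^2)}$ gives $O(p^{3/2}e^{-n/4p^2})$; this is precisely the bound already proved there, and it is insensitive to $s$ for the same reason the ideal part is. Adding the three contributions yields the claim. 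The main point is that there is essentially no new mathematics: the substance lives entirely in Ref.~\cite{WattsParham}'s constant-depth realization of the angle $-\tfrac\pi4+\tfrac\pi p\big(\sum_i x_i(-1)^{h(d)_i}+s\big)$ on a single qubit via $A_{m,\theta}^{\dagger}$ gates and in its anti-concentration estimate, and the only thing requiring (minimal) care is confirming that appending the single fixed rotation $\exp(i\tfrac{\pi s}{p}X)$ disturbs neither the constant depth nor the error bound — which it does not, because $s$ enters only as a deterministic shift of the near-uniform variable $k\bmod p$.
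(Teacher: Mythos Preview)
Your proposal is correct and takes essentially the same approach as the paper, whose own ``proof'' is simply the one-line remark that the statement follows from Lemma~10 and Theorem~31 of Ref.~\cite{WattsParham} together with a trivial generalization to~$s$. You have merely made that generalization explicit---the shift $s$ is implemented by the fixed single-qubit rotation on the output qubit (exactly as drawn in \cref{fig:A_m_circuit}) and then disappears from the error analysis by translation invariance of the uniform average over $\mathbb{Z}/p\mathbb{Z}$---which is precisely the content the paper leaves implicit.
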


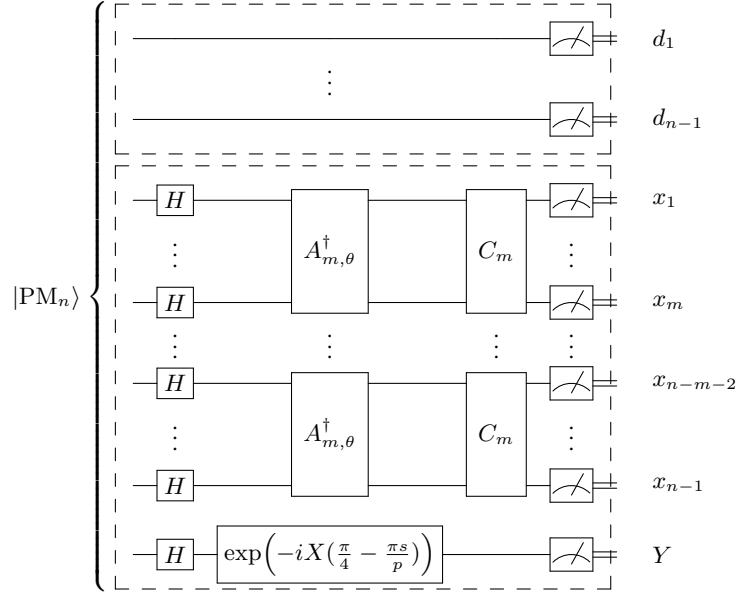
\begin{figure}[H]
    \begin{align*}
    \Qcircuit @C=1em @R=1em {
    \lstick{}   & \qw & \qw  & \qw & \meter & \cw & \rstick{d_{1}} \\
    & & {\makecell{\vdots\\}} \\ 
    \lstick{}   & \qw & \qw  & \qw & \meter & \cw & \rstick{d_{n-1}} \\
    &&&\\
    \lstick{}   & \gate{H}  & \multigate{2}{A_{m,\theta}^{\dagger}} & \multigate{2}{C_m}& \meter & \cw & \rstick{x_{1}} \\
    & {\makecell{\mbox{}\\\vdots\\ \mbox{}\\}} & \nghost{A_{m,\theta}^{\dagger}} & \nghost{C_m} & {\makecell{\vdots\\}}\\ 
    \lstick{}   & \gate{H}  & \ghost{A_{m,\theta}^{\dagger}} & \ghost{C_m}  & \meter & \cw & \rstick{x_{m}} \\
    & {\makecell{\mbox{}\\\vdots\\ \mbox{}\\}} & {\makecell{\mbox{}\\\vdots\\ \mbox{}\\}} & {\makecell{\mbox{}\\\vdots\\ \mbox{}\\}} & {\makecell{\mbox{}\\\vdots\\ \mbox{}\\}}\\ 
    \lstick{}   & \gate{H}  & \multigate{2}{A_{m,\theta}^{\dagger}} & \multigate{2}{C_m}& \meter & \cw & \rstick{x_{n-m-2}} \\
    & {\makecell{\mbox{}\\\vdots\\ \mbox{}\\}} & \nghost{A_{m,\theta}^{\dagger}} & \nghost{C_m} & {\makecell{\vdots\\}}\\ 
    \lstick{}   & \gate{H}  & \ghost{A_{m,\theta}^{\dagger}} & \ghost{C_m}  & \meter & \cw & \rstick{x_{n-1}} \\
    \lstick{}   & \gate{H}  & \gate{\exp(-i X (\frac{\pi}{4} - \frac{\pi s}{p}))} &\qw & \meter & \cw & \rstick{Y}
    \gategroup{1}{1}{3}{5}{1.5em}{--}
    \gategroup{5}{1}{12}{5}{1.5em}{--}
    \inputgroupv{1}{12}{3em}{10.9em}{\ket{\BPM{n}} \hspace{35pt}}
    } 
    \end{align*}
    \caption{Constant-depth non-unitary circuit producing an approximate distribution $P$ to $(Z, \pmmajmod{p,s}(Z))$, for $\theta = \frac{\pi}{p}$. Note that the input state vector $\ket{\BPM{n}}$ can be prepared in constant-depth,  following \cref{thm:Binary_Tree_Poor_Man_Construction}.}
    \label{fig:A_m_circuit}
\end{figure}
The circuit that achieves this approximation is depicted in \cref{fig:A_m_circuit}.
The $C_m$ gate denotes a permutation whose action on the $m$ qubit computational basis state vector $\ket{x_1,x_2,\dots ,x_m}$ is 
given by
\begin{equation}
    C_m \ket{x_1, x_2, \dots ,x_m} = \ket{x_2 ,x_3,\dots ,x_m ,x_1}.
\end{equation}
The $C_m$ gate can be implemented by a series of $m-1$ SWAP gates.
Problematically, the $A_{m, \theta}$ gate is non-unitary.
To solve this issue, the unitary gate $U_{m, \theta}$ is defined as
\begin{equation}
    U_{m, \theta} := \begin{cases}
        A_{m,\theta} \ket{x} & \text{if } x \in B^m,\\
        C^{-1}(A_{m,\theta} \ket{x} + i^{m+2|x|} \sin^m(\theta) A_{m,\theta} \ket{\overline{x}} ) & \text{otherwise,}
    \end{cases}
\end{equation}
where \begin{equation}
C:=\sqrt{1-\sin^{2m}(\theta)} 
\end{equation}
is a normalizing constant and $B^m$ is any set containing half the bit strings of length $m$ with the property that for any $x \in \{0,1\}^m$ either $x\in B^m$ or $\overline{x} \in B^m$. Essentially, $U_{m,\theta}$ arises from applying the Gram-Schmidt ortho-normalization to the problematic states output by $A_{\theta, m}$.
It is shown that the unitary gate $U_{m, \theta}$ approximates the non-unitary gate $A_{m,\theta}$.
\begin{lem}[Lemma 18 in Ref. \cite{WattsParham}]
    For any m, there exists a unitary matrix $U_{m,\theta}$ satisfying
    \begin{equation}
        || A_{m,\theta} - U_{m, \theta} ||_F \in O(\theta^{m})
    \end{equation}
    as $\theta \rightarrow 0$.
\end{lem}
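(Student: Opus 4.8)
The plan is to expose the block structure of $A_{m,\theta}$ in the computational basis and then recognize $U_{m,\theta}$ as a Gram--Schmidt orthonormalization of its columns. First I would note that every column of $A_{m,\theta}$ is already a unit vector: writing $A_{m,\theta}\ket{x} = \bigotimes_{j=1}^{m} v_j^{(x)}$ with $v_j^{(x)} := \exp(i\theta x_{j-1}X)\ket{x_j}$ and cyclic convention $x_0 := x_m$, each factor has unit norm because $\exp(i\theta x_{j-1}X)$ is unitary. Hence only non-orthogonality among distinct columns can obstruct unitarity, and the core of the argument is to control the overlaps $\langle A_{m,\theta}\ket{x'},\,A_{m,\theta}\ket{x}\rangle = \prod_{j=1}^{m}\langle v_j^{(x')},\,v_j^{(x)}\rangle$. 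A short case analysis over the bit pairs $(x_j,x_j')$ and $(x_{j-1},x_{j-1}')$ shows that the $j$-th factor is always nonzero when $x_j = x_j'$ (it equals $1$ or $\cos\theta$), whereas when $x_j\neq x_j'$ it vanishes unless also $x_{j-1}\neq x_{j-1}'$, in which case it equals $\pm i\sin\theta$. Therefore the product is nonzero only if the disagreement set $T = \{\,j:x_j\neq x_j'\,\}$ is closed under the cyclic predecessor $j\mapsto j-1$; on the cycle $\mathbb{Z}/m$ the only such sets are $\varnothing$ and $\{1,\dots,m\}$. Thus distinct columns of $A_{m,\theta}$ are orthogonal except for the $2^{m-1}$ complementary pairs $\{A_{m,\theta}\ket{x},A_{m,\theta}\ket{\overline{x}}\}$, and evaluating the product over all $m$ factors for such a pair gives $\big|\langle A_{m,\theta}\ket{\overline{x}},\,A_{m,\theta}\ket{x}\rangle\big| = \sin^{m}\theta$, with an explicit unit-modulus phase.

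With this picture, unitarity of $U_{m,\theta}$ reduces to bookkeeping. The $2^{m-1}$ two-dimensional subspaces $\operatorname{span}\{A_{m,\theta}\ket{x},A_{m,\theta}\ket{\overline{x}}\}$ are mutually orthogonal, so re-orthonormalizing each pair of columns separately (while fixing one column of each pair) yields a unitary. This is precisely the definition: $U_{m,\theta}\ket{x} = A_{m,\theta}\ket{x}$ for $x\in B^m$, while for $x\notin B^m$ (so $\overline{x}\in B^m$) the vector $U_{m,\theta}\ket{x} = C^{-1}\big(A_{m,\theta}\ket{x} + i^{m+2|x|}\sin^{m}(\theta)\,A_{m,\theta}\ket{\overline{x}}\big)$ is the Gram--Schmidt correction orthonormalizing it against $U_{m,\theta}\ket{\overline{x}} = A_{m,\theta}\ket{\overline{x}}$. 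Here one checks, from the overlap computed above, that $C = \sqrt{1-\sin^{2m}\theta}$ is the correct normalization and that $i^{m+2|x|}$ realizes the Gram--Schmidt phase (in the $A_{m,\theta}$-gate conventions of Ref.~\cite{WattsParham}). Mutual orthogonality of the blocks then makes $\{U_{m,\theta}\ket{y}\}_y$ orthonormal, so $U_{m,\theta}$ is unitary.

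It remains to estimate $\|A_{m,\theta}-U_{m,\theta}\|_F$ column by column. The columns indexed by $y\in B^m$ are untouched and contribute $0$. For $y\notin B^m$ one has $U_{m,\theta}\ket{y}-A_{m,\theta}\ket{y} = (C^{-1}-1)\,A_{m,\theta}\ket{y} + C^{-1}\,i^{m+2|y|}\sin^{m}(\theta)\,A_{m,\theta}\ket{\overline{y}}$, and since $C^{-1}-1 = O(\sin^{2m}\theta)$ while $\|A_{m,\theta}\ket{y}\| = \|A_{m,\theta}\ket{\overline{y}}\| = 1$, this difference has norm $O(\theta^{m})$. Summing the squared column norms over the $2^{m-1}=O(1)$ columns that change yields $\|A_{m,\theta}-U_{m,\theta}\|_F^{2} = O(\theta^{2m})$, hence $\|A_{m,\theta}-U_{m,\theta}\|_F = O(\theta^{m})$ as $\theta\to 0$.

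The step I expect to be genuinely delicate is the first one: correctly pinning down the overlap structure of the columns of $A_{m,\theta}$ --- in particular the cyclic ``closed under predecessor'' observation and the precise unit-modulus phase of the complementary overlap, which then has to be matched against the prefactor $i^{m+2|x|}$ in the definition of $U_{m,\theta}$ (the one place where the $A_{m,\theta}$-gate convention must be tracked with care). Everything afterwards is a routine perturbative estimate; in fact one could sidestep the explicit formula entirely and appeal to the polar decomposition of $A_{m,\theta}$, using that the block structure forces $A_{m,\theta}^{\dagger}A_{m,\theta} = I + O(\theta^{m})$.
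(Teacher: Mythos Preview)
The paper does not supply its own proof of this lemma: it is quoted as Lemma~18 of Ref.~\cite{WattsParham}, accompanied only by the one-line remark that $U_{m,\theta}$ ``arises from applying the Gram--Schmidt ortho-normalization to the problematic states output by $A_{\theta,m}$.'' Your proposal correctly fleshes out precisely this Gram--Schmidt idea---the cyclic overlap computation isolating the complementary pairs $\{x,\overline{x}\}$ with inner product of magnitude $\sin^m\theta$, followed by the column-wise Frobenius estimate---so it is both sound and aligned with the intended approach.
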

In particular, setting $\theta=\frac{\pi}{p}$, with $p=n^c$ for some constant $c \in (0,1/2)$, the equations (116)-(119) in Ref. \cite{WattsParham} give 
us the following.

\begin{lem} [Closeness of unitaries]\label{lem:A_and_U_closeness}
    For any m, and $\theta = \frac{\pi}{p}$, and $p = n^c$ for some constant $c \in (0,1/2)$, there exists a unitary matrix $U_{m,\theta}$ satisfying
    \begin{equation}
        || A_{m,\theta} - U_{m, \theta} ||_\infty \in O(n^{-mc}).
    \end{equation}
\end{lem}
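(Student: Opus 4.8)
The plan is to obtain the operator-norm estimate directly from the Frobenius-norm version already available, namely Lemma~18 of Ref.~\cite{WattsParham}, which states $||A_{m,\theta} - U_{m,\theta}||_F \in O(\theta^m)$ as $\theta \to 0$. The first thing to check is that we are genuinely in that asymptotic regime: here $\theta = \pi/p$ with $p = n^c$ and $c \in (0,1/2)$ a constant, so $\theta = \pi n^{-c} \to 0$ as $n \to \infty$, and the hypothesis of Lemma~18 applies.

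Next I would pass from the Frobenius norm to $||\cdot||_\infty$ by the elementary inequality $||M||_\infty \le ||M||_F$, valid for every matrix $M$ (the spectral norm is the largest singular value, which is at most the $\ell^2$-norm of the full singular-value vector; the same inequality holds trivially if $||\cdot||_\infty$ denotes the largest-modulus entry). Combining with Lemma~18 gives $||A_{m,\theta} - U_{m,\theta}||_\infty \in O(\theta^m)$. It then remains only to substitute $\theta = \pi n^{-c}$ and absorb constants: since $m = \lceil 2/c + 1 \rceil$ is a constant fixed by $c$, the quantity $\pi^m$, the implicit constant hidden in Lemma~18, and the matrix dimension $2^m$ are all independent of $n$, so $O(\theta^m) = O(\pi^m n^{-mc}) = O(n^{-mc})$, which is the claim.

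An alternative, more hands-on route --- the one indicated by Eqs.~(116)-(119) of Ref.~\cite{WattsParham} --- is to bound $A_{m,\theta} - U_{m,\theta}$ column by column. By definition this difference vanishes on the block $x \in B^m$ and, on the complementary block, equals $(1 - C^{-1}) A_{m,\theta}\ket{x} - C^{-1} i^{m+2|x|} \sin^m(\theta)\, A_{m,\theta}\ket{\overline{x}}$ with $C = \sqrt{1 - \sin^{2m}\theta}$. Using $C^{-1} = 1 + O(\sin^{2m}\theta)$, the bound $\sin^m\theta = O(\theta^m)$, and the fact that $A_{m,\theta}$ has bounded entries (it is a composition of one-qubit unitary factors applied in a staggered pattern), each such column has norm $O(\theta^m)$; summing over the $O(1)$ many columns and substituting $\theta = \pi n^{-c}$ again yields $O(n^{-mc})$.

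I do not expect a real obstacle here: the argument is essentially a norm conversion followed by a substitution. The only point worth a moment's care is confirming that the $O(\cdot)$ constant inherited from Lemma~18 carries no hidden growth in $n$, which holds precisely because $m$, and hence the $2^m \times 2^m$ size of the gate, is a constant determined by the choice of $c$ rather than something that scales with $n$.
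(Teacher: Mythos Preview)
Your proposal is correct. The paper itself offers no proof beyond pointing to Eqs.~(116)--(119) of Ref.~\cite{WattsParham}, which is precisely your second, column-by-column route; your first route via $\|M\|_\infty \le \|M\|_F$ together with Lemma~18 of \cite{WattsParham} is a clean shortcut that reaches the same bound with less work, since $m$ (and hence $2^m$ and $\pi^m$) is a constant independent of $n$. One small wording point: the lemma is stated ``for any $m$'', not specifically $m = \lceil 2/c + 1\rceil$, but your argument goes through verbatim for any fixed $m$, so this does not affect correctness.
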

And thus \cref{thm:distr_approx} follows.
We further call the distribution $Q$, which is produced by the circuit in \cref{fig2} a) and uses the $U_{m,\theta}^\dagger$ gates.
Now, let $m = \left \lceil 2/c + 1 \right \rceil$.
Note that the circuits producing $P$ and $Q$ differ only by replacing every $A_{m, \theta}^\dagger$ gate with a $U_{m, \theta}^\dagger$ gate.
There are $O(n)$ such gates.
It follows from using \cref{lem:A_and_U_closeness} and following the steps of equations (118)-(121) in Ref.~\cite{WattsParham}, that
\begin{equation}
    \TVD(P,Q) \in O(n)\times O(n^{-mc})= O(n^{-c-1}) = O\left(\frac{1}{np}\right). \label{eq:bound_for_TVPQ}
\end{equation}

\subsection{Circuit compilation of $U_{m, \theta}$}
        Directly following the arguments put forward in Ref.~\cite{WattsParham}, the $U_{m, \theta}$ gates can be compiled into arbitrary one-qubit gates and two-qubit CNOT gates.
        Following the discussions in Ref.~\cite{green2001countingfanoutcomplexityquantum}, any operator on $m$ qubits can be performed with at most $O(m^3 4^m)$ two-qubit gates, using the methods described in Ref.~\cite{reck_realization} and in a runtime that is bounded by a function of $m$.
        Further, any of those two qubit gates can be decomposed into a sequence of at most 5 one-qubit gates and CNOT gates, following the results of Ref.~\cite{barence_elementarygates}.
        Thus, for constant $m$ and any $\theta$, there exists a constant time algorithm that compiles the $U_{m, \theta}$ gates into a constant depth sequence of arbitrary one-qubit gates and CNOT gates. The algorithm to obtain the circuit compilation to any unitary operator is also excellently explained in chapter 5.4 of Ref.~\cite{rieffel2011quantum}.

\section{Learning algorithm} \label{appendix:learn_alg}

In this appendix, we present the detailed rigorous analysis of the learning algorithm used in the main text. We show that the circuits generating the distributions $Q = D_{n,p,s}$ can be learned from examples to perfect precision.
For our analysis, we require the following fact:
\begin{fact} [Fact 3.2 in
Ref.\ \cite{viola}]\label{fact:puniform}
    Let $a_1, a_2, \dots a_t$ be nonzero integers modulo $p$, and let $(x_1, x_2, \dots, x_t) \in \{0,1\}^t$ be sampled uniformly. Then the total variation distance between $\sum_{i=1}^t a_ix_i \mod p$ and $U_{p}$, the uniform distribution over $\{0, 1, \dots, p-1\}$ is at most $\sqrt{p} e^{-t/p^2}$.
\end{fact}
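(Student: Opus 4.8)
The plan is to prove this by elementary discrete Fourier analysis on the cyclic group $\mathbb{Z}_p$. Write $\omega = e^{2\pi i/p}$ and let $\mu$ be the law of $S := \sum_{i=1}^{t} a_i x_i \bmod p$. Since the $x_i$ are independent and uniform on $\{0,1\}$, the characteristic function factorizes,
\begin{equation}
  \widehat{\mu}(j) := \Exp_{S\sim\mu}\!\big[\omega^{jS}\big] \;=\; \prod_{i=1}^{t}\Exp_{x_i}\!\big[\omega^{j a_i x_i}\big] \;=\; \prod_{i=1}^{t}\frac{1+\omega^{j a_i}}{2},
\end{equation}
so that $|\widehat{\mu}(j)| = \prod_{i=1}^{t}\big|\cos(\pi j a_i/p)\big|$, whereas $\widehat{U_p}(j) = \indexf[j=0]$.

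First I would control $|\widehat{\mu}(j)|$ for every frequency $j \neq 0$. Because $p$ is prime and each $a_i \not\equiv 0$, the residue $k_i := j a_i \bmod p$ lies in $\{1,\dots,p-1\}$; using $|\cos(\pi k/p)| = \cos\!\big(\pi\min(k,p-k)/p\big)$, the monotonicity of $\cos$ on $[0,\pi/2]$, and $\min(k_i,p-k_i)\ge 1$, each factor is at most $\cos(\pi/p)$, hence $|\widehat{\mu}(j)| \le \cos^{t}(\pi/p)$. The elementary inequality $\cos x \le e^{-x^2/2}$ on $[0,\pi/2]$ (equivalent to $\tan x\ge x$) together with $\pi^2/2\ge 1$ then gives $\cos(\pi/p)\le e^{-1/p^2}$, so $|\widehat{\mu}(j)|\le e^{-t/p^2}$ for all $j\neq 0$.

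It then remains to convert this $\ell^\infty$ bound on the spectrum into a total-variation bound. By Fourier inversion, $\mu(s)-\tfrac1p = \tfrac1p\sum_{j\neq 0}\widehat{\mu}(j)\,\omega^{-js}$, and orthogonality of characters (Plancherel) gives $\sum_s|\mu(s)-\tfrac1p|^2 = \tfrac1p\sum_{j\neq 0}|\widehat{\mu}(j)|^2 \le e^{-2t/p^2}$; Cauchy--Schwarz over the $p$ coordinates then yields
\begin{equation}
  \TVD(\mu, U_p) = \tfrac12\sum_s\Big|\mu(s)-\tfrac1p\Big| \;\le\; \tfrac{\sqrt p}{2}\Big(\sum_s\big|\mu(s)-\tfrac1p\big|^2\Big)^{1/2} \;\le\; \tfrac12\sqrt p\,e^{-t/p^2} \;\le\; \sqrt p\,e^{-t/p^2}.
\end{equation}
The whole argument is routine; the only delicate point is the cosine estimate — one must notice that although $\cos(\pi k/p)$ can be near $-1$ for $k$ near $p-1$, its absolute value over nonzero residues is still maximized by $\cos(\pi/p)$, and this is exactly where primality of $p$ (ensuring $j a_i\not\equiv 0$) is used — together with keeping the Fourier-normalization conventions consistent so the final constant comes out as stated. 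I do not expect a genuine obstacle beyond this bookkeeping.
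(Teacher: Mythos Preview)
Your argument is correct and is precisely the standard Fourier-analytic proof of this fact. The paper itself does not prove the statement: it is quoted verbatim as Fact~3.2 from Viola's work and merely invoked where needed, so there is no in-paper proof to compare against; your derivation is essentially the one Viola gives (characteristic function factorization, the bound $|\cos(\pi/p)|\le e^{-1/p^2}$ from $\cos x\le e^{-x^2/2}$, then Plancherel plus Cauchy--Schwarz).
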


The main contribution of this section is proving the following theorem:

\begin{thm}[Learning algorithm, \cref{thm:learn_alg}]
    There exists a polynomial-time PAC generator learning algorithm for $\mathcal{D}$, that for sufficiently large $p \in O(n^{1/3})$, any $\delta > 0$, and any $s \in \Field_p$, outputs a description of a constant-depth quantum circuit whose Born distribution $D'$ satisfies
   \begin{equation}
        \TVD(D_{n,p,s}, D') = 0
    \end{equation}
    with probability $1-\delta$, and which uses $M~\in~O\left( p^4 \log\left( \frac{p}{\delta} \right) \right)$ examples of $D_{n,p,s}$.
\end{thm}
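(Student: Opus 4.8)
The plan is to turn the informal sketch from the main text into a rigorous sampling-and-concentration argument. First I would set up the conditional structure: writing $Z = (d,x)$, condition on a fixed pair $(d,x)$ and let $k = k(d,x) = \sum_{i=0}^{n-1} x_i(-1)^{h(d)_i} \bmod p$. Using \cref{eq:psi} and the chain of approximations leading to \cref{eq:bound_for_TVPQ} and \cref{thm:distr_approx}, I would record that under the target distribution $Q = D_{n,p,s}$ the probability of seeing the last bit equal to $\parity(x)$, conditioned on $(d,x)$, equals $f(k) = \cos^2\!\big(-\tfrac{\pi}{4} + \tfrac{\pi}{p}(k+s)\big)$ up to an additive error that is $O(1/(np))$ in total variation over all of $Z$. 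The key geometric fact I would isolate is that the map $k \mapsto f(k)$ on $\{0,1,\dots,p-1\}$ is (up to the small perturbation) a strictly monotone arc of a squared cosine, so there is a unique $k^\star$ with $f(k^\star) \le \tfrac12 < f(k^\star+1)$, namely $k^\star = p - s \bmod p$, and the ``gap'' $|f(k^\star+1) - f(k^\star)|$ and more generally $\min_k |f(k) - 1/2|$ over the relevant window is $\Omega(1/p)$ — this quantitative separation is what makes $s$ identifiable from noisy estimates.

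Next I would describe the estimator. From the $M$ i.i.d.\ examples I would, for each target value $k \in \{0,\dots,p-1\}$, collect the subset of examples $(d,x,b)$ whose first $2n-2$ bits give $k(d,x) = k$, and form the empirical frequency $\hat f(k)$ of the event $b = \parity(x)$ within that subset. Two things need to be controlled: (i) each ``bucket'' $k$ receives enough samples, and (ii) within a bucket the empirical frequency concentrates around $f(k)$. For (i) I would invoke \cref{fact:puniform}: since the $x_i$ are uniform and the coefficients $(-1)^{h(d)_i}$ are nonzero mod $p$, conditioned on $d$ the value $k$ is $\sqrt p\, e^{-n/p^2}$-close to uniform on $\{0,\dots,p-1\}$, so each bucket has probability $\tfrac1p(1 \pm o(1))$; hence $M \in O(p^4 \log(p/\delta))$ examples give each of the $p$ buckets at least $\Omega(p^3\log(p/\delta))$ samples, except with probability $\delta/3$ (Chernoff plus a union bound over $p$ buckets). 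For (ii), with $\Omega(p^3 \log(p/\delta))$ samples per bucket a Hoeffding bound gives $|\hat f(k) - f(k)| \le O(1/p^{3/2}) \ll \Omega(1/p)$ for every $k$ simultaneously, except with probability $\delta/3$; I would also fold in the $O(1/(np))$ model-approximation error, which is dominated by the $\Omega(1/p)$ gap for $p \in O(n^{1/3})$. Combining, with probability $1-\delta$ the monotone-threshold rule applied to $(\hat f(k))_k$ returns exactly $k^\star$, whence $s = p - k^\star \bmod p$ is recovered exactly; the algorithm then outputs the circuit of \cref{fig2}a) with this $s$, whose Born distribution is by definition $D_{n,p,s}$, giving $\TVD(D_{n,p,s}, D') = 0$.

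Finally I would check the complexity bookkeeping: sample complexity $M \in O(p^4\log(p/\delta))$ as claimed, and runtime polynomial in $n$ since each example is $O(n)$ bits, computing $h(d)$ and $k(d,x)$ is $O(n)$ work, bucketing and thresholding is $O(Mp)$, and emitting the constant-depth circuit description (including the constant-$m$ compilation of the $U_{m,\theta}^\dagger$ gates discussed in \cref{appendix:gate_defintions}) takes time bounded by a function of the constant $m$ plus $O(n)$. The main obstacle I anticipate is making the two error budgets interlock cleanly: one must verify that the per-bucket statistical error $O(1/p^{3/2})$, the bucket-size fluctuation, the $\sqrt p\,e^{-n/p^2}$ non-uniformity from \cref{fact:puniform}, and the $O(1/(np))$ circuit-approximation error are all strictly smaller than the identifiability gap $\Omega(1/p)$ in the regime $p \in O(n^{1/3})$ — in particular the exponential term $e^{-n/p^2}$ is negligible precisely because $p^2 = o(n)$, and the choice $M \in O(p^4\log(p/\delta))$ is exactly what balances $\sqrt{\log(p/\delta)/(\text{samples per bucket})} = \sqrt{\log(p/\delta)/(M/p)}$ against $1/p$. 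Getting those constants to line up, and handling the boundary buckets near $k^\star$ where $f$ is closest to $1/2$, is the delicate part; everything else is routine Chernoff/Hoeffding estimation.
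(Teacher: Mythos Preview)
Your proposal is correct and follows essentially the same strategy as the paper: estimate the conditional probabilities $f(k)=\cos^2(-\tfrac{\pi}{4}+\tfrac{\pi}{p}(k+s))$ for each residue $k$, exploit the $\Omega(1/p)$ gap at the crossing point $k^\star=p-s\bmod p$ to recover $s$ exactly with high probability, and then output the defining circuit of \cref{fig2}a). The only minor difference is in how the estimation is packaged---you bucket samples by $k$ and apply Hoeffding per bucket with a union bound, whereas the paper forms a single $p$-dimensional random vector (scaled by $p$, so $\|X\|_\infty\le p$) and invokes a black-box multivariate mean estimator; your route is slightly more elementary and in fact supports a tighter $O(p^3\log(p/\delta))$ bound, though you correctly match the stated $O(p^4\log(p/\delta))$.
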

\begin{proof}
    We are given $M$ examples of the discrete distribution $Q := D_{n,p,s}$ and $n,p$.
    To learn $Q$, our goal is to find $s$ using the $M$ examples and then output a description of the circuit in \cref{fig2} a).\\

    As an intermediate step, let us for now assume that we observe examples from the distribution $P$ defined in \cref{thm:approx-pmmajmod}.
    Then \Cref{eq:psi} reveals that, given a certain assignment of $Z = (d,x)$, the probability to observe $(d,x,\parity(x))$ in the examples is $\cos^2(-\frac{\pi}{4} + \frac{\pi}{p} \left(k + s\right))$, where
    \begin{equation}
    k=\sum_{i=0}^{n-1} x_{i} (-1)^{h(d)_{i}} \bmod p.
    \end{equation}
    The strategy for the learning algorithm is essentially to find out the $\cos^2(\cdot)$
    values for all $k = 1,\dots, p$ using $M$ examples.
    Thus, basically, in the learning algorithm, we build a $p$-dimensional vector $\Vec{V}$ using the $M$ examples, which approximates the $p$ values for $\cos^2(\cdot)$.
    The $\cos^2(\cdot)$ function is depicted in \Cref{fig2}.c).
    From the figure, we can see that if we are able to find the coordinate $p-s \bmod p$, it will reveal $s$ to us.
    Note that the coordinate $p-s$ is integer.
    To find this coordinate, we search for the coordinate $k^*$ in $\Vec{V}$ where the $\cos^2$ function equals $1/2$ and where it is larger than $1/2$ for $k^*+1$ (i.e., where $\cos^2$ crosses $1/2$ from below). We can then output our estimate of $s$ by computing $p - k^* \bmod p$, which is correct if indeed $k^* = p - s \bmod p$.
    We now describe how we build $\Vec{V}$ and how closely it must approximate $\cos^2$ for our learning algorithm to work with high probability.
    
    First, we define a vector $\Vec{v}$, which has the property that its expectation value $\Exp[\Vec{v}]$ approximates the $p$ values for $\cos^2(\cdot)$.
    We then show how one can build $\Vec{V}$ such that it that approximates $\Exp[\Vec{v}]$.
    Define \begin{align}
        |x^{(i)}|:=\sum_{j=0}^{n-1} x^{(i)}_{j} (-1)^{h(d^{(i)})_{j}}
    \end{align}
    for $i = 1,\dots,M$, where $x^{(i)}$ and $d^{(i)}$ are taken from the $i$'th example.
    Then, define the vector 
    \begin{equation}
    \Vec{v} := \sum_{i=1}^{M} \indexf(\parity(x^{(i)}) = Y_{x^{(i)}}) \Vec{e}_{|x^{(i)}| \bmod p}p
    \end{equation}
    where $\Vec{e}_k$ is the unit vector of dimension $p$.
    Since the $M$ examples are drawn randomly, we can interpret $\Vec{v}$ as a random variable and analyze its expectation value.
    Let $P_k$ be the marginal probability of observing the bits $(d, x)$ such that $|x| = k$.
    For $P_k$, it follows from \cref{fact:puniform} that
    \begin{equation} \label{eq:Pk}
        \forall k\in \{0,\dots, p-1\},\,\,
        \left | P_k - \frac{1}{p} \right | \leq \sqrt{p} \exp(- \frac{n-1}{2p^2}).
    \end{equation}
    Since the expectation value of $\Vec{v}$ is given through
    \begin{align}
        \Exp[\Vec{v}]_{k} &= P_k \times [ \cos^2\left(-\frac{\pi}{4} + \frac{\pi}{p} \left(k + s\right) \right) \times  p
        + \sin^2\left(-\frac{\pi}{4} + \frac{\pi}{p} \left(k + s\right) \right) \times 0]
        \nonumber\\
        & = P_k \times p \times \cos^2\left(-\frac{\pi}{4} + \frac{\pi}{p} \left(k + s\right) \right). \label{eq:Ev_k}
    \end{align}
    Multiplying both sides of \cref{eq:Pk} with $p$ and combining it with \cref{eq:Ev_k} gives us that the $k$'th element of $\Exp[\Vec{v}]$ is an $p^{3/2}\exp(-\frac{n-1}{2p^2})$-close approximation to $\cos^2\left(-\frac{\pi}{4} + \frac{\pi}{p} \left(k + s\right) \right)$.\\
    In the beginning of this proof, we assumed that we observe samples of the distribution $P$, however, in reality we observe samples from $Q$.
    Thus, what we have shown above is that the $k$'th element of $\Exp[\Vec{v}]_k$ approximates $\cos^2\left(-\frac{\pi}{4} + \frac{\pi}{p} \left(k + s\right) \right)$ under $P$.
    But in reality we are interested in the approximation under $Q$.
    Using \cref{eq:bound_for_TVPQ}, we find that it holds that
    \begin{equation}
        \left| \Exp_P[\Vec{v}]_k - \Exp_Q[\Vec{v}]_k \right | \leq \TVD(P, Q) \times p \leq O(1/n).
    \end{equation}
    And thus the $k$'th element of $\Exp_Q[\Vec{v}]$ is an $(p^{3/2}\exp(-\frac{n-1}{2p^2}) + O(1/n))$-close approximation to $\cos^2\left(-\frac{\pi}{4} + \frac{\pi}{p} \left(k + s\right) \right)$. \\
    Now, the following lemma shows that that we can efficiently approximate $\Exp[\Vec{v}]_k$ with $M$ examples, for all $k$.
    \begin{lem}[Lemma C.1, page 15 \cite{PolicyGradients}]
        Let $X$ be a $d$-dimensional bounded random variable such that $||x||_{\infty} \leq B$.
        Given sampling access to $X$, $\epsilon, \delta > 0$, there exists a classical multivariate mean estimator that returns an $\epsilon$-precise estimate of $\Exp[X]$ in $\ell_\infty$-norm, with success probability at least $1-\delta$ using $O(\frac{B^2}{\epsilon^2} \log(\frac{d}{\delta}))$ samples of $X$.
    \end{lem}
    It follows that we can obtain an estimate $\Vec{V}$ for $\Exp_Q[\Vec{v}]$, where each entry is is $\epsilon$-close, with probability $1-\delta$, using \begin{equation}
        M \in O\left( \frac{p^2}{\epsilon^2} \log\left( \frac{p}{\delta} \right) \right) 
         \end{equation}
         many samples.
    Hence, $\Vec{V}_k$ approximates $\cos^2\left(-\frac{\pi}{4} + \frac{\pi}{p} \left(k + s\right) \right)$ with error at most $\epsilon + p^{3/2}\exp(-\frac{n-1}{2p^2}) + O(1/n)$.\\
    From the discussions above, we can now give the algorithm to find $s$. 
    \begin{enumerate}
        \item Construct $\Vec{V}$ using $M$ samples
    \begin{eqnarray}
        \Vec{V} = \frac{1}{m}\sum_{i=1}^{M} \indexf(\parity(x^{(i)})= Y_{x^{(i)}}) \Vec{e}_{|x^{(i)}| \bmod p}p.
        \end{eqnarray}
        \item Search in $\Vec{V}$ for the position $k^*$, where
            \begin{enumerate}
                \item the entry at $k^*$ is $\tau$-close to $\frac{1}{2}$,
                \item the entry at $k^* + 1 \bmod p$ is larger than $\frac{1}{2} + \tau$.
            \end{enumerate}
            If none such $k^*$ is found, output \textit{failure}.
        \item Output $\Tilde{s} = p - k^* \bmod p$.
    \end{enumerate}
    In order for the algorithm to successfully find $s$, it is evident from \Cref{fig2}.c) that $\tau$ needs to be smaller than half of the difference between the two consecutive values of $\cos^2\left(-\frac{\pi}{4} + \frac{\pi}{p} \left(k + s\right) \right)$ for $k=p-s \bmod p$ and $k=p-s+1 \bmod p$.
    That is, $\tau$ is required to be smaller than
    \begin{align}
    &\frac{1}{2}\left|\cos^2(-\frac{\pi}{4})  - \cos^2(-\frac{\pi}{4} + \frac{\pi}{p})\right|\\
    \nonumber
    &= \frac{1}{2} \left| \frac{1}{2} -  \cos^2(-\frac{\pi}{4} + \frac{\pi}{p}) \right|.
    \nonumber
    \end{align}
    If that is the case, then our algorithm succeeds.
    In particular, it suffices that 
    \begin{align}
        \tau &< \frac{1}{2} \left| \frac{1}{2} -  \frac{1}{2}+\frac{\pi}{3p} \right|\nonumber \\
        & < \frac{1}{2} \left| \frac{1}{2} -  \cos^2(-\frac{\pi}{4} + \frac{\pi}{p}) \right|,
    \end{align}
    which is true for $p\geq 3$, as proven in \cref{lem:prop_cosine} below.
    And thus we require 
    \begin{equation}
        \tau < \frac{\pi}{6p}
    \end{equation}
    for the algorithm to succeed.
    It holds that 
    \begin{align}
    \epsilon + p^{3/2}\exp(-\frac{n-1}{2p^2}) + O\left(\frac{1}{n}\right) < \frac{\pi}{6p}
    \end{align}
    asymptotically in $p$, when $\epsilon = \frac{1}{7p}$ and $n \in \Omega(p^3)$ ($\leftrightarrow p \in O(n^{1/3})$).
    And thus, for such choices of $\epsilon, n$ and using our discussions above, the algorithm to find $s$ succeeds with probability
    $1-\delta$, when using 
    \begin{equation}
    M \in O\left( \frac{p^2}{\epsilon^2} \log\left( \frac{p}{\delta} \right) \right) 
    \end{equation}
    many examples.
    When we have found $s$, we can output the circuit in 
    \cref{fig2} a)
    as the generator.
    Since the target distribution has exactly been generated by that circuit, the TV distance the output generator and the target distribution is $0$.
\end{proof}

Finally, we show a simple property of the cosine function that is used in the argument above.

\begin{lem}[Property of the cosine function]\label{lem:prop_cosine}
    For $p \geq 3$, it holds that
    \begin{equation}
    \cos^2\left(-\frac{\pi}{4} + \frac{\pi}{p}\right) > \frac{1}{2} + \frac{\pi}{3p}.
    \end{equation}
\end{lem}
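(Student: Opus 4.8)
\emph{Proof proposal.} The plan is to collapse the claim to an elementary inequality for the sine via the double-angle formula, and then use concavity of $\sin$ to obtain a bound that is uniform over all admissible $p$. First I would write $\cos^2\vartheta = \tfrac12\bigl(1+\cos 2\vartheta\bigr)$ with $\vartheta = -\tfrac\pi4 + \tfrac\pi p$. Since $2\vartheta = -\tfrac\pi2 + \tfrac{2\pi}{p}$ and $\cos\!\bigl(x-\tfrac\pi2\bigr)=\sin x$, this gives $\cos^2\!\bigl(-\tfrac\pi4+\tfrac\pi p\bigr) = \tfrac12\bigl(1+\sin(2\pi/p)\bigr)$. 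Substituting, the desired inequality $\cos^2\!\bigl(-\tfrac\pi4+\tfrac\pi p\bigr) > \tfrac12 + \tfrac{\pi}{3p}$ is equivalent to $\sin(2\pi/p) > \tfrac{2\pi}{3p}$.

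Next I set $u := 2\pi/p$, so that for $p \ge 3$ we have $u \in (0, 2\pi/3] \subset (0,\pi)$, and the goal reduces to proving $\sin u > u/3$ on $(0, 2\pi/3]$. Because $\sin$ is concave on $[0,\pi]$ (its second derivative is $-\sin u \le 0$ there), its graph lies above every chord: in particular, above the chord through $(0,0)$ and $\bigl(\tfrac{2\pi}{3}, \sin\tfrac{2\pi}{3}\bigr) = \bigl(\tfrac{2\pi}{3}, \tfrac{\sqrt3}{2}\bigr)$, which is the line $u \mapsto \tfrac{3\sqrt3}{4\pi}\,u$. Hence $\sin u \ge \tfrac{3\sqrt3}{4\pi}\,u$ for all $u \in [0, 2\pi/3]$.

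It then remains only to check the numerical inequality $\tfrac{3\sqrt3}{4\pi} > \tfrac13$, i.e.\ $9\sqrt3 > 4\pi$; this holds since $9\sqrt3 > 15.5$ while $4\pi < 12.6$. Combining, $\sin u \ge \tfrac{3\sqrt3}{4\pi}\,u > \tfrac{u}{3}$ for every $u \in (0, 2\pi/3]$, which translates back through the substitutions into the statement of the lemma. I do not anticipate any real obstacle here: the only point needing slight care is to make the lower bound on $\sin$ hold uniformly for all $p \ge 3$ rather than merely asymptotically, and the chord/concavity argument handles this cleanly. (Equivalently one could invoke that $u \mapsto \sin(u)/u$ is decreasing on $(0,\pi)$, so its minimum over $(0,2\pi/3]$ equals $\tfrac{\sin(2\pi/3)}{2\pi/3} = \tfrac{3\sqrt3}{4\pi} > \tfrac13$.)
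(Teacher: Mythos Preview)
Your proof is correct. The paper's argument is closely related but does not make your double-angle simplification: it sets $x=1/p$, defines $f(x)=\cos^2(-\tfrac{\pi}{4}+\pi x)-\tfrac12-\tfrac{\pi x}{3}$, and verifies $f''<0$ on $[0,\tfrac12]$, $f'(0)>0$, $f'(\tfrac12)<0$, $f(0)=0$, $f(\tfrac13)>0$, concluding that $f$ first increases then decreases and hence stays positive on $(0,\tfrac13]$. The two proofs rest on the same concavity---after your identity one has $f(x)=\tfrac12\sin(2\pi x)-\tfrac{\pi x}{3}$, so the paper's $f''<0$ is exactly concavity of $\sin$---but your reduction to the single inequality $\sin u>u/3$ on $(0,2\pi/3]$ and the chord (equivalently, the monotonicity of $\sin u/u$) makes the mechanism more transparent and avoids checking derivative signs at several points; the paper's version, on the other hand, packages everything into one function and a standard first-derivative-test shape argument.
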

\begin{proof}
    Take $x:=1/p$. We need to show that $f(x):= \cos^2(-\frac{\pi}{4} + \pi x) - \frac{1}{2} - \frac{\pi x}{3}  \geq 0$ for $0 \leq x \leq \frac{1}{3}$.
    It can be easily verified that $f'(0) > 0$ and $f'(\frac{1}{2}) < 0$, and  $f''(x) < 0$ for $x \in [0,\frac{1}{2}]$.
    It follows that $f$ is increasing to a maximum and then decreasing in the interval $[0, \frac{1}{2}]$.
    Since $f(0) = 0$ and $f(\frac{1}{3}) > 0$, it follows that $f(x) \geq 0$ for $x \in [0, \frac{1}{3}]$.
\end{proof}

\end{document}